\newtheorem{lemma}{Lemma}
\newtheorem{theorem}{Theorem}
\newtheorem{definition}{Definition}
\newtheorem{proposition}{Proposition}
\newtheorem{corollary}{Corollary}
\begin{document}

\title{Secure Communication for Spatially Sparse Millimeter-Wave Massive MIMO Channels via Hybrid Precoding}


\author{\small
\IEEEauthorblockN{
Jindan Xu$^1$,~\emph{Student Member,~IEEE},
Wei Xu$^1$,~\emph{Senior Member,~IEEE},\\
Derrick Wing Kwan Ng$^2$,~\emph{Senior Member,~IEEE},
and A. Lee Swindlehurst$^3$,~\emph{Fellow,~IEEE}\\
\IEEEauthorblockA{
$^1$National Mobile Communications Research Laboratory, Southeast University, Nanjing 210096, China\\
$^2$School of Electrical Engineering and Telecommunications, University of New South Wales, NSW 2052, Australia\\
$^3$Center for Pervasive Communications and Computing, University of California, Irvine, CA 92697, USA\\
Email: \{jdxu, wxu\}@seu.edu.cn, w.k.ng@unsw.edu.au, swindle@uci.edu}
}}

\maketitle

\begin{abstract}
In this paper, we investigate secure communication over sparse millimeter-wave (mm-Wave) massive multiple-input multiple-output (MIMO) channels by exploiting the spatial sparsity of legitimate user's channel. We propose a secure communication scheme in which information data is precoded onto dominant angle components of the sparse channel through a limited number of radio-frequency (RF) chains, while artificial noise (AN) is broadcast over the remaining nondominant angles interfering only with the eavesdropper with a high probability. It is shown that the channel sparsity plays a fundamental role analogous to secret keys in achieving secure communication. Hence, by defining two statistical measures of the channel sparsity, we analytically characterize its impact on secrecy rate. In particular, a substantial improvement on secrecy rate can be obtained by the proposed scheme due to the uncertainty, i.e., ``entropy'', introduced by the channel sparsity which is unknown to the eavesdropper. It is revealed that sparsity in the power domain can always contribute to the secrecy rate. In contrast, in the angle domain, there exists an optimal level of sparsity that maximizes the secrecy rate. The effectiveness of the proposed scheme and derived results are verified by numerical simulations.

\end{abstract}

\begin{IEEEkeywords}
Physical layer security, massive multiple-input multiple-output (MIMO), millimeter-wave (mm-Wave), artificial noise (AN), radio-frequency (RF) chains.
\end{IEEEkeywords}

\IEEEpeerreviewmaketitle
\section{Introduction}

Secure communication has recently attracted pervasive interest for safeguarding multifarious wireless services in cellular networks.
Traditionally, communication security is realized by cryptographic encryption algorithms implemented at network and application layers \cite{Introduction}.
These techniques are vulnerable since they rely on the assumption that adversaries have limited computational ability.
On the other hand, it was discovered in \cite{Shannon} that in theory, perfect secrecy can be achieved at the physical layer, which serves as a complement technology to conventional security methods.


In the past decade, physical layer security has attracted significant interests \cite{Survey_PLS}-\cite{Sec_QianXu}.
In \cite{Wiretap}, the well-known wiretap channel model was studied by Wyner.
In this model, the channel between the transmitter and eavesdropper is assumed to be a degraded version of the channel between the transmitter and legitimate user, guaranteeing a nonnegative secrecy capacity.
Later in \cite{Sec_broadcast}, the authors investigated a general case where the two channels are independent. It was revealed that secure communication is achievable if the capacity of the legitimate user's channel is larger than that of the eavesdropper's channel.
As a result, when the eavesdropper's channel happens to be stronger than that of the legitimate user's channel, artificial noise (AN) has to be designed and exploited properly to guarantee secure transmission \cite{Sec_MIMO_AN}-\cite{Sec_QianXu}.
More specifically, the AN in \cite{Sec_MIMO_AN} was designed to be transmitted in the space orthogonal to the legitimate user's channel such that only the eavesdropper's channel is impaired.
The authors of \cite{Sec_MIMO_AN_3} then extended this orthogonal AN design to scenarios with multiple eavesdroppers.
In addition, a joint design of AN and information carrying signals was studied in \cite{Sec_QianXu}.

With the recent development of multiple-input multiple-output (MIMO) techniques, physical layer security in the context of massive MIMO systems has been intensively investigated, e.g., \cite{hmWang2019Secure}-\cite{lsFan2017Secure}.
The authors of \cite{ypWu2016Secure} investigated secure transmission strategies when an active eavesdropper is present in a multiuser massive MIMO system.
In \cite{Secure_Zhu_1}, secure communication using maximal-ratio-transmission (MRT) precoding was studied for a multi-cell massive MIMO network.
Then, in \cite{Secure_Zhu_2}, the authors further studied the secrecy performance adopting more sophisticated precoding schemes, i.e., zero-forcing (ZF), regularized channel inversion (RCI), and collaborative ZF/RCI precoders. 
The impacts of low-resolution digital-to-analog converters (DACs) and spatial channel correlation on secure massive MIMO communications were respectively studied in \cite{J_Xu_Secure} and \cite{lsFan2017Secure}.

Along with massive MIMO, millimeter-wave (mm-Wave) communication also has been recognized as a promising solution to realizing ultra-high data rates for next-generation wireless networks, e.g., \cite{mmWave_Beamforming}-\cite{Kwan2017_5G}. Its applications to guarantee communication security have attracted significant attentions \cite{Secure_mmWave_antenna}-\cite{YongxuZhu2017Secure}.
Particularly in \cite{Secure_mmWave_antenna}, a low-complexity directional modulation technique was developed for point-to-point secure mm-Wave communications.
In \cite{Secure_mmWave_SPA}, a wireless transmission architecture, referred to as a switched phased-array (SPA), was proposed for mm-Wave systems to enhance physical layer security.
The authors of \cite{Secure_mmWave_Hybrid} analyzed the outage probability of secure communication in a mm-Wave overlaid microwave network in the presence of blockages.
The performance in terms of network-wide physical layer security was studied in \cite{ChaoWang2016PLS} for downlink transmission in a mm-Wave cellular network.
Moreover in \cite{YongxuZhu2017Secure}, the impacts of AN, blockages, and antenna gains on the system secrecy performance were characterized for a mm-Wave ad-hoc network.
Unlike conventionally adopted sub-6 GHz frequency bands in current cellular applications, mm-Wave channels are in general dominated by line-of-sight (LoS) components \cite{mmWave_propagation1}, \cite{mmWave_propagation2}.
In other words, spatial sparsity commonly exists in mm-Wave channels leading to new challenges in designing efficient communication systems \cite{Sparsity1}-\cite{Sparsity3}.
However, from the perspective of physical layer security, the sparsity of mm-Wave channels can be exploited to provide potential benefits.
Specifically, the sparsity of the legitimate user's channel depends on certain parameters, such as the angles of the dominant propagation directions which are unknown to eavesdroppers.
This channel-specific sparsity information can be leveraged as an implicit secret key which reduces the leakage of private information to potential eavesdroppers.

To the best of our knowledge, most existing works (e.g., \cite{hmWang2019Secure}-\cite{lsFan2017Secure}, \cite{Secure_mmWave_antenna}-\cite{YongxuZhu2017Secure}) have studied secrecy performance for massive MIMO communications from the viewpoint of time and frequency domains. However, the impact of channel sparsity has rarely been investigated.
In this paper, we investigate secure massive MIMO communications over spatially sparse mm-Wave channels and propose a secure transmission scheme implemented in angle domain.
By exploiting the sparsity pattern of the legitimate user's channel, confidential signals are transmitted over the dominant angular directions via a limited number of radio-frequency (RF) chains.
Secure communication is guaranteed by injecting AN into the remaining nondominant propagation angles.
In practice, the dominant propagation angles of the legitimate user's channel are different from that of the eavesdropper, and hence, the AN would cause less interference at the desired user but significant interference at the eavesdropper.
Based on our derived expressions of the secrecy rate, we quantitatively characterize the impact of channel sparsity on the system secrecy rate.
The main contributions of our work are summarized as follows.

1)
We show that the secrecy rate of mm-Wave massive MIMO systems can benefit from channel sparsity. This benefit is quantitatively characterized in closed-form as an additive secrecy rate bonus which can be interpreted as \emph{sparsity information}.
In particular, we propose two metrics, $\chi_\mathrm{L}$ and $\chi_\mathrm{H}$, for analyzing the impact of channel sparsity on the secrecy rate.
For low signal-to-noise ratios (SNRs), the secrecy rate bonus is $M_t\log_2 \chi_\mathrm{L}$ bits/s/Hz, where $M_t$ is the number of RF chains employed at the transmitter,
$\chi_\mathrm{L}(\rho,\eta)\triangleq \eta^{\rho-1}\left[\eta+(1-\eta)\rho\right]$ is a measure of channel sparsity while $\rho$ and $\eta$ represent, respectively, the sparsity in the angle domain and sparsity in the power domain.
For high SNRs, the secrecy rate bonus equals $M_t\log_2 \chi_\mathrm{H}$ bits/s/Hz where $\chi_\mathrm{H}(\rho,\eta)\triangleq \eta^{\rho-2}\left[\eta+(1-\eta)\rho\right]\left(\!1\!-\!\frac{M_r}{N_t(1-\rho)}\!\right)^{-1}$, $N_t$ is the number of antennas at the transmitter, and $M_r$ is the number of RF chains at the legitimate user.

2)
We analyze the effect of channel sparsity on secrecy rate from two perspectives, i.e., the values of $\rho$ and $\eta$ in the angle and power domains, respectively.
The smaller the values of $\rho$ and $\eta$ are, the sparser the channel is.
From the derived expressions of $\chi_\mathrm{L}$ and $\chi_\mathrm{H}$, we observe that the secrecy rate first increases and then decreases with $\rho$.
The maximum secrecy rate is achieved by a specific sparsity value in the angle domain given by $\rho^*=-\frac{1}{\ln\eta}-\frac{\eta}{1-\eta}$ for a fixed $\eta$.
On the other hand, the secrecy rate monotonically decreases with $\eta$, which implies that channel sparsity in the power domain always benefits the secrecy rate of the system.


The rest of the paper is structured as follows.
The channel model is introduced and the secure communication scheme is presented in Section~\uppercase\expandafter{\romannumeral2}.
In Section~\uppercase\expandafter{\romannumeral3}, we derive the ergodic achievable secrecy rate in the context of massive MIMO.
Two upper bounds for the secrecy rate are obtained in Section~\uppercase\expandafter{\romannumeral4}, based on which we quantitatively characterize the effect of channel sparsity.
Simulation results are presented in Section \uppercase\expandafter{\romannumeral5} and conclusions are drawn in Section~\uppercase\expandafter{\romannumeral6}.

\emph{Notation}:
$\mathbf{A}^T$, $\mathbf{A}^*$, and $\mathbf{A}^H$ represent the transpose, conjugate, and conjugate transpose of $\mathbf{A}$, respectively.
$\mathbf{a}\sim \mathcal{CN}(\mathbf{0},\mathbf{\Sigma})$ denotes a circularly symmetric complex Gaussian vector with zero mean and covariance $\mathbf{\Sigma}$.
$[\mathbf{A}(i,j)]_{ (i,j) \in \mathcal{S}}$ represents a matrix consisting of elements in $\mathbf{A}$ with indices belonging to $\mathcal{S}$ while $[\mathbf{a}(i)]_{ i \in \mathcal{S}}$ represents a vector consisting of elements in $\mathbf{a}$ with indices belonging to $\mathcal{S}$.
$\textrm{Tr}(\mathbf{A})$ and $|\mathbf{A}|$ are respectively the trace and determinant of $\mathbf{A}$.
$\mathbb{E}\{\cdot\}$ is the expectation operator.
$\xrightarrow{\mathrm{a.s.}}$ denotes almost sure convergence.
$[x]^+={\mathrm {max}}\{0,x\}$ returns the maximum of $0$ and $x$.

\section{System Model}

\begin{figure*}[tb]
\centering\includegraphics[width=0.95\textwidth]{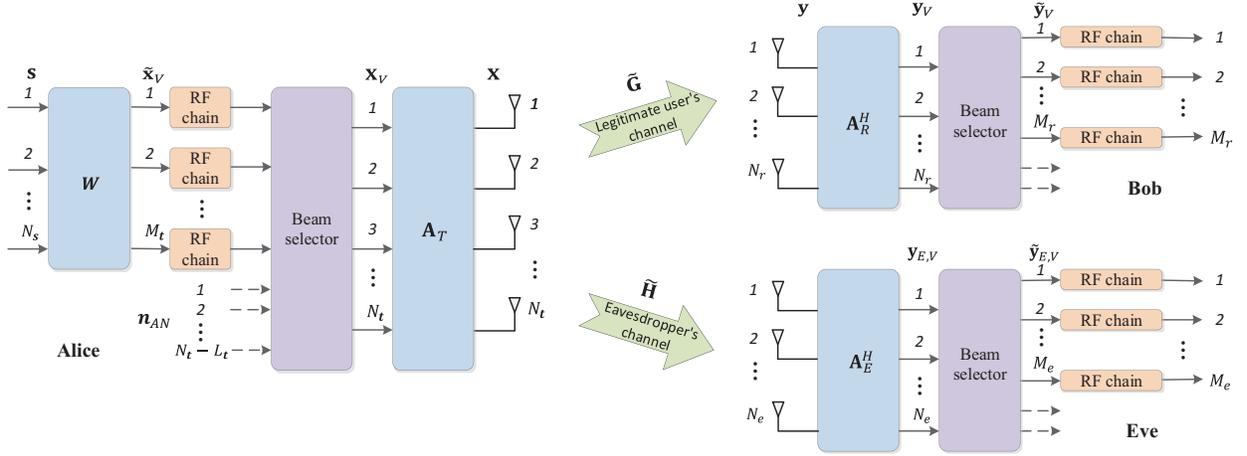}
\caption{System block diagram with a transmitter (Alice) communicating with a legitimate receiver (Bob) in the existence of a potential eavesdropper (Eve).}
\label{Block}
\end{figure*}

We consider a mm-Wave massive MIMO network, as depicted in Fig. \ref{Block}, where the transmitter (Alice), legitimate user (Bob), and passive eavesdropper (Eve) possess $N_t$, $N_r$, and $N_e$ antennas, respectively.
The devices are respectively equipped with $M_t$, $M_r$, and $M_e$ RF chains, where we generally have $M_t\leq N_t$, $M_r\leq N_r$, and $M_e\leq N_e$ in typical massive MIMO communication systems~\cite{Massive_Liang}.
Note that a hybrid architecture is implemented at Alice, including a digital precoder $\mathbf{W}$ and an analog precoder $\mathbf{A}_T$.


\subsection{Channel Model}

In order to express the channel using the angle domain decomposition \cite{channel}, we define the array response matrices at Alice and Bob respectively as $\mathbf{A}_T=\!\frac{1}{\sqrt{N_t}}\left[\mathbf{a}_T\left(\bar{\theta}_{T,1}\right),...,\mathbf{a}_T\left(\bar{\theta}_{T,N_t}\right)\right]$ and
$\mathbf{A}_R\!\!=\!\frac{1}{\sqrt{N_r}}\left[\mathbf{a}_R\left(\bar{\theta}_{R,1}\right),...,\mathbf{a}_R\left(\bar{\theta}_{R,N_r}\right)\right]$, where $\mathbf{a}_T(\bar{\theta}_{T,j})=  [1,e^{-j2\pi\bar{\vartheta}_{T,j}},...,e^{-j2\pi(N_t-1)\bar{\vartheta}_{T,j}}]^T$ and
$\mathbf{a}_R(\bar{\theta}_{R,i})= [\!1,e^{-j2\pi\bar{\vartheta}_{R,i}},...,e^{-j2\pi(N_r-1)\bar{\vartheta}_{R,i}}\!]^T$. Variables $\bar{\theta}_{T,j}\!=\!\arcsin\!\left(\!\frac{\lambda \bar{\vartheta}_{T,j}}{d}\!\right)$ and $\bar{\theta}_{R,i}\!=\!\arcsin\!\left(\!\frac{\lambda \bar{\vartheta}_{R,i}}{d}\!\right)$ are respectively the angle-of-departure (AOD) and angle-of-arrival (AOA) corresponding to uniformly spaced spatial angles $\bar{\vartheta}_{T,j}=\frac{j-1-(N_t-1)/2}{N_t}$ and $\bar{\vartheta}_{R,i}=\frac{i-1-(N_r-1)/2}{N_r}$.
Parameter $d$ is the distance between adjacent antennas and $\lambda$ is the wavelength of the information carrier frequency.
Then, the flat fading channel from Alice to Bob can be expressed as \cite{channel}
\begin{align}
\tilde{\mathbf{G}}=\frac{1}{\sqrt{N_r N_t}}\sum\limits_{i=1}^{N_r} \sum\limits_{j=1}^{N_t} \mathbf{G}(i,j) \mathbf{a}_R(\bar{\theta}_{R,i}) \mathbf{a}_T^H(\bar{\theta}_{T,j})
=&\mathbf{A}_R \mathbf{G} \mathbf{A}_T^H
,
\label{G}
\end{align}
where $\mathbf{G}\in \mathbb{C}^{N_r\times N_t}$ is usually called the equivalent virtual channel representation \cite{Sayeed2002Deconstructing}.
Since $\mathbf{A}_R$ and $\mathbf{A}_T$ are unitary discrete Fourier transform (DFT) matrices, we have $\mathbf{G}= \mathbf{A}_R^H \tilde{\mathbf{G}} \mathbf{A}_T$.

For the mm-Wave massive MIMO channel, sparsity implies that there are only a few significant non-zero coefficients in the virtual channel matrix $\mathbf{G}$, which we refer to as dominant beams~\cite{Recon0}.
According to \cite{Recon}, a low-dimensional virtual representation of the channel is available if the channel sparsity is present.
Let $\mathcal{U}$ denote the selection mask which contains all the indices of the dominant beams in $\mathbf{G}$.
The low-dimensional virtual representation of the channel is defined as
\begin{align}
\mathbf{G}^l=\left[ \mathbf{G}(i,j)\right]_{(i,j)\in \mathcal{U}}\in \mathbb{C}^{L_r\times L_t},
\label{Hv_til}
\end{align}
where $L_r$ and $L_t$ represent the numbers of dominant beams from the view of Bob and Alice, respectively.
Assuming that we exploit fewer RF chains than the number of dominant beams\footnote{If there are more RF chains than dominant beams, a cost-effective way is to exploit the same number of RF chains as dominant beams, i.e., $M_t = L_t$ and $M_r= L_r$.}, we have $M_t\leq L_t\leq N_t$ and $M_r\leq L_r\leq N_r$.


Similarly, the virtual channel matrix for Eve is given by
\begin{align}
\mathbf{H}
=\mathbf{A}_{E}^H \tilde{\mathbf{H}} \mathbf{A}_T
,
\label{Hv_e}
\end{align}
where $\tilde{\mathbf{H}}\in \mathbb{C}^{N_e\times N_t}$ is the flat fading channel between Alice and Eve,
and $\mathbf{A}_{E}=\frac{1}{\sqrt{N_e}}\left[\mathbf{a}_E\left(\bar{\theta}_{E,1}\right),...,\right.$ $\left.\mathbf{a}_E\left(\bar{\theta}_{E,N_e}\right)\right]$ is the array response matrix at Eve with $\mathbf{a}_E(\bar{\theta}_{E,i})\!= \! [\!1,e^{-j2\pi\bar{\vartheta}_{E,i}},...,e^{-j2\pi(N_e-1)\bar{\vartheta}_{E,i}}\!]^T$, $\bar{\theta}_{E,i}=\arcsin\left(\frac{\lambda \bar{\vartheta}_{E,i}}{d}\right)$, and $\bar{\vartheta}_{E,i}=\frac{i-1-(N_e-1)/2}{N_e}$.
Then, the corresponding low-dimensional virtual channel is
\begin{align}
\mathbf{H}^l=\left[ \mathbf{H}(i,j)\right]_{(i,j)\in \mathcal{E}}\in \mathbb{C}^{L_e\times L_t},
\label{Hev_til}
\end{align}
where $\mathcal{E}$, analogous to $\mathcal{U}$ for $\mathbf{G}$, denotes the selection mask of $\mathbf{H}$ and $L_e$ is the number of dominant receive beams at Eve.
 Similar to $L_t$ and $L_r$, we have $M_e\leq L_e\leq N_e$.



\subsection{Secure Transmission over a Sparse Channel}

Due to the limited number of RF chains exploited at Alice, Bob, and Eve, it is necessary to first choose a subset of the beam directions before transmission over these RF chains.
Similar to \cite{Recon0} and \cite{Recon}, the statistical channel state information (CSI) for $\mathbf{G}$, in the term of the sparsity pattern $\mathcal{U}$, is assumed to be known to Alice and Bob for beam selection before data transmission.
Also, Eve is assumed to know only the statistical CSI for $\mathbf{H}$, in the form of the sparsity pattern $\mathcal{E}$, for beam selection.
Estimation of this type of statistical channel information has been studied in a number of papers such as \cite{Statistical_CE_1}-\cite{Statistical_CE_3} for mm-Wave channels.
Given the independence of the channels from Alice to Bob and Eve \cite{Secure_Zhu_1}-\cite{J_Xu_Secure}, the sparsity patterns, $\mathcal{U}$ and $\mathcal{E}$, are assumed to be distinct.
By exploiting the mismatch between $\mathcal{U}$ and $\mathcal{E}$, we propose a secure communication scheme in which confidential signals are aligned with the dominant beams of $\mathbf{G}$ while AN is sent over its nondominant beams.
From the perspective of Eve, however, the confidential signals and AN are randomly transmitted through the dominant and nondominant beams of $\mathbf{H}$.
In this way, the channel of Eve is deliberately degraded and secure communication can be achieved.
For detection, however, it is generally possible to acquire fairly accurate instantaneous CSI of the selected sub-channels through channel estimation \cite{Sec_MIMO_AN}, \cite{Secure_Zhu_1}.
The instantaneous CSI can be estimated by exploiting channel training via pilots.

A description of this scheme is depicted in Fig. \ref{Block}.
Let $\mathbf{s} \in \mathbb{C}^{N_s\times 1}$ denote the normalized confidential signals for Bob. 
The transmitter first generates $\widetilde{\mathbf{x}}_V=\mathbf{Ws}\in \mathbb{C}^{M_t\times 1}$ by using precoding matrix $\mathbf{W}\in \mathbb{C}^{M_t\times N_s}$.
Then, the precoded signal $\widetilde{\mathbf{x}}_V$ goes through a beam selector to form a virtual signal vector, $\mathbf{x}_V \in \mathbb{C}^{N_t\times 1}$, where AN is also included to combat the channel quality of Eve.
Given $\mathcal{U}=\{(i,j)|i\in\mathcal{U}_r,j\in\mathcal{U}_t\}$, where $\mathcal{U}_r$ and $\mathcal{U}_t$ are respectively the sets of indices of the dominant receive and transmit beam directions, $\mathbf{x}_V$ is generated as
\begin{align}
[\mathbf{x}_V(j)]=\left\{
\begin{aligned}
&\widetilde{\mathbf{x}}_V,~~~~j\in\bar{\mathcal{U}}_t,\\
&\mathbf{n}_{AN},~~j\notin\mathcal{U}_t,\\
&\mathbf{0},~~~~~~j\in\mathcal{U}_t, j\notin\bar{\mathcal{U}}_t,
\end{aligned}\right.
\label{xv}
\end{align}
where $\mathbf{n}_{AN} \in \mathbb{C}^{(N_t-L_t)\times 1}$ is the AN and $\bar{\mathcal{U}}_t$ is a subset of $\mathcal{U}_t$ with size $M_t$.
Considering that only $\mathcal{U}_t$ is available at Alice, it is natural to randomly select a subset of $\bar{\mathcal{U}}_t$ from $\mathcal{U}_t$.
Then, the transmit signal, $\mathbf{x} \in \mathbb{C}^{N_t\times 1}$, is finally obtained as
\begin{align}
\mathbf{x}=\mathbf{A}_T\mathbf{x}_V,
\label{x}
\end{align}
where $\mathbf{A}_T$ is the analog precoder.
The transmit power of $\mathbf{x}$ is limited by $P$, i.e., $\mathrm{Tr}\left(\mathbb{E}\left\{\mathbf{x} \mathbf{x}^H\right\}\right)=\mathrm{Tr}\left(\mathbb{E}\left\{\mathbf{x}_V \mathbf{x}_V^H\right\}\right)=\mathrm{Tr}\left(\mathbb{E}\left\{\widetilde{\mathbf{x}}_V \widetilde{\mathbf{x}}_V^H\right\}\right)+\mathrm{Tr}\left(\mathbb{E}\left\{\mathbf{n}_{AN} \mathbf{n}_{AN}^H\right\}\right)=P$.
Denoting by $\phi$ the proportion of power allocated for confidential signals, we have
\begin{align}
\mathrm{Tr}\left(\mathbb{E}\left\{\widetilde{\mathbf{x}}_V \widetilde{\mathbf{x}}_V^H\right\}\right)&=\phi P,
\label{P_s}
\\
\mathrm{Tr}\left(\mathbb{E}\left\{\mathbf{n}_{AN} \mathbf{n}_{AN}^H\right\}\right)&=(1-\phi)P.
\label{P_AN}
\end{align}

At Bob, the received signal is given by
\begin{align}
\mathbf{y}=\tilde{\mathbf{G}}\mathbf{x}+\mathbf{n},
\label{y}
\end{align}
where $\mathbf{n} \sim \mathcal{CN}(\mathbf{0},\sigma_n^2\mathbf{I}_{N_r})$ is the thermal noise.
Applying the receive analog decoder $\mathbf{A}_R^H$ and substituting \eqref{G} and \eqref{x} into \eqref{y} yields
\begin{align}
\mathbf{y}_V&=\mathbf{A}_R^H \mathbf{y}
=\mathbf{G}\mathbf{x}_V+\mathbf{n}_V,
\label{yv}
\end{align}
where we define $\mathbf{n}_V \triangleq \mathbf{A}_R^H\mathbf{n} \sim \mathcal{CN}(\mathbf{0},\sigma_n^2\mathbf{I}_{N_r})$ because $\mathbf{A}_R^H$ is a unitary matrix. 
Since there are only $M_r$ RF chains, we adopt a beam selector, represented by a selection mask $\bar{\mathcal{U}}_r\subseteq \mathcal{U}_r$, to randomly choose $M_r$ from the $L_r$ dominant receive beams.
From \eqref{xv} and \eqref{yv}, we obtain the signals for detection as
\begin{align}
\widetilde{\mathbf{y}}_V&=[\mathbf{y}_V(i)]_{i\in\bar{\mathcal{U}}_r}
=\bar{\mathbf{G}}\widetilde{\mathbf{x}}_V + \hat{\mathbf{G}} \mathbf{n}_{AN} + \widetilde{\mathbf{n}}_V,
\label{yv_til}
\end{align}
where $\bar{\mathbf{G}}=\left[ \mathbf{G}(i,j)\right]_{i\in \bar{\mathcal{U}}_r,j\in \bar{\mathcal{U}}_t}\in \mathbb{C}^{M_r\times M_t}$ and
$\hat{\mathbf{G}}=\left[ \mathbf{G}(i,j)\right]_{i\in \bar{\mathcal{U}}_r,j\notin \mathcal{U}_t}\in \mathbb{C}^{M_r\times (N_t-L_t)}$ are submatrices of $\mathbf{G}$, and $\widetilde{\mathbf{n}}_V=\left[\mathbf{n}_V(i)\right]_{i\in \bar{\mathcal{U}}_r}\in \mathbb{C}^{M_r\times 1}$.

The expression in \eqref{yv_til} implies that the information carrying signal, $\widetilde{\mathbf{x}}_V$, is transmitted through the dominant beams in $\bar{\mathbf{G}}$ while the AN, $\mathbf{n}_{AN}$, is sent via the nondominant beams.
From \cite{Raghavan}, it has been shown that the channel coefficients in $\bar{\mathbf{G}}$ are approximately equal to the sum of the complex gains of a set of physical paths.
When there are a sufficiently large number of paths, each entry of $\bar{\mathbf{G}}$ tends to behave as a complex Gaussian random variable due to the Central Limit Theorem \cite{channel}.
Assuming that distinct channel coefficients correspond to approximately disjoint subsets of paths and that the path gains are statistically independent, we assume that the entries of $\bar{\mathbf{G}}$ are statistically independent.
For example, the channel measurements in \cite{channel_measure1} showed an average number of 10 distinct clusters and 9 rays in each cluster for a practical 60 GHz mmWave communication scenario.
Thus, it is reasonable to approximate the elements of $\bar{\mathbf{G}}$ by zero-mean independent complex Gaussian variables \cite{Recon0}, \cite{Recon}, as further validated by the experimental measurement results in \cite{channel_measure2}.
Without loss of generality, we assume that long-term power control is employed to compensate for the large-scale fading of Bob such that the entries of $\bar{\mathbf{G}}$ have unit variance.
Similarly, the entries of $\hat{\mathbf{G}}$ are modeled by independent complex Gaussian variables with zero mean and variance $\eta$.
In general, the nondominant channel coefficients in $\hat{\mathbf{G}}$ experience a small gain in the power domain. Since the variance of the dominant coefficients in $\bar{\mathbf{G}}$ is modeled unit, we assume that $\eta\in(0,1)$.


At Eve, the receive signal vector can be expressed as
\begin{align}
\mathbf{y}_{E,V}&=\mathbf{A}_E^H\tilde{\mathbf{H}}\mathbf{x}+\mathbf{A}_E^H\mathbf{n}_E
=\mathbf{H}\mathbf{x}_V+\mathbf{A}_E^H\mathbf{n}_E,
\label{yev}
\end{align}
where $\mathbf{n}_E \sim \mathcal{CN}(\mathbf{0},\sigma_e^2\mathbf{I}_{N_e})$ denotes the thermal noise at Eve.
To guarantee secure transmission in the worst case, $\sigma_e^2$ is assumed to be small enough so that $\mathbf{n}_E$ can be ignored in the sequel \cite{Secure_Zhu_2}.
Given $M_e$ RF chains at Eve, a beam selector $\bar{\mathcal{E}}_r\subseteq \mathcal{E}_r$ is used to randomly choose $M_e$ dominant receive beams from $\mathcal{E}_r$, where $\mathcal{E}_r=\{i|(i,j)\in\mathcal{E}\}$ contains all the dominant receive beams at Eve.
Substituting \eqref{xv} into \eqref{yev}, the low-dimensional signal vector after beam selection equals
\begin{align}
\widetilde{\mathbf{y}}_{E,V}&=[\mathbf{y}_{E,V}(i)]_{i\in\bar{\mathcal{E}}_r}
=\bar{\mathbf{H}} \widetilde{\mathbf{x}}_V +\hat{\mathbf{H}} \mathbf{n}_{AN},
\label{yev_til}
\end{align}
where 
$\bar{\mathbf{H}}=\left[ \mathbf{H}(i,j)\right]_{i\in \bar{\mathcal{E}}_r,j\in\bar{\mathcal{U}}_t}\in \mathbb{C}^{M_e\times M_t}$ and $\hat{\mathbf{H}}=\left[ \mathbf{H}(i,j)\right]_{i\in \bar{\mathcal{E}}_r,j\notin\mathcal{U}_t}\in \mathbb{C}^{M_e\times (N_t-L_t)}$.
Similarly, we assume that $\left[ \mathbf{H}(i,j)\right]_{i\in \bar{\mathcal{E}}_r,j\in \mathcal{E}_t}\sim \mathcal{CN}(\mathbf{0},\mathbf{I})$ and $\left[ \mathbf{H}(i,j)\right]_{i\in \bar{\mathcal{E}}_r,j\notin \mathcal{E}_t} \sim \mathcal{CN}(\mathbf{0},\eta\mathbf{I})$, where $\mathcal{E}_t=\{j|(i,j)\in\mathcal{E}\}$ contains all the dominant transmit beams of $\mathbf{H}$.

Unlike Bob, the confidential signals in \eqref{yev_til} are likely to be allocated on nondominant beams for the Eve's channel, while AN would likely spread over dominant beams.
This mismatch between the channel sparsity patterns $\mathcal{U}$ and $\mathcal{E}$ can be exploited to degrade Eve's capacity.

\subsection{An Illustrative Example}

\begin{figure*}[tb]
\centering
\subfigure[A typical example with $M_t=M_r=M_e=L_t=L_r=L_e$.]{
\begin{minipage}[t]{0.588\linewidth}
\centering
\includegraphics[width=1\linewidth]{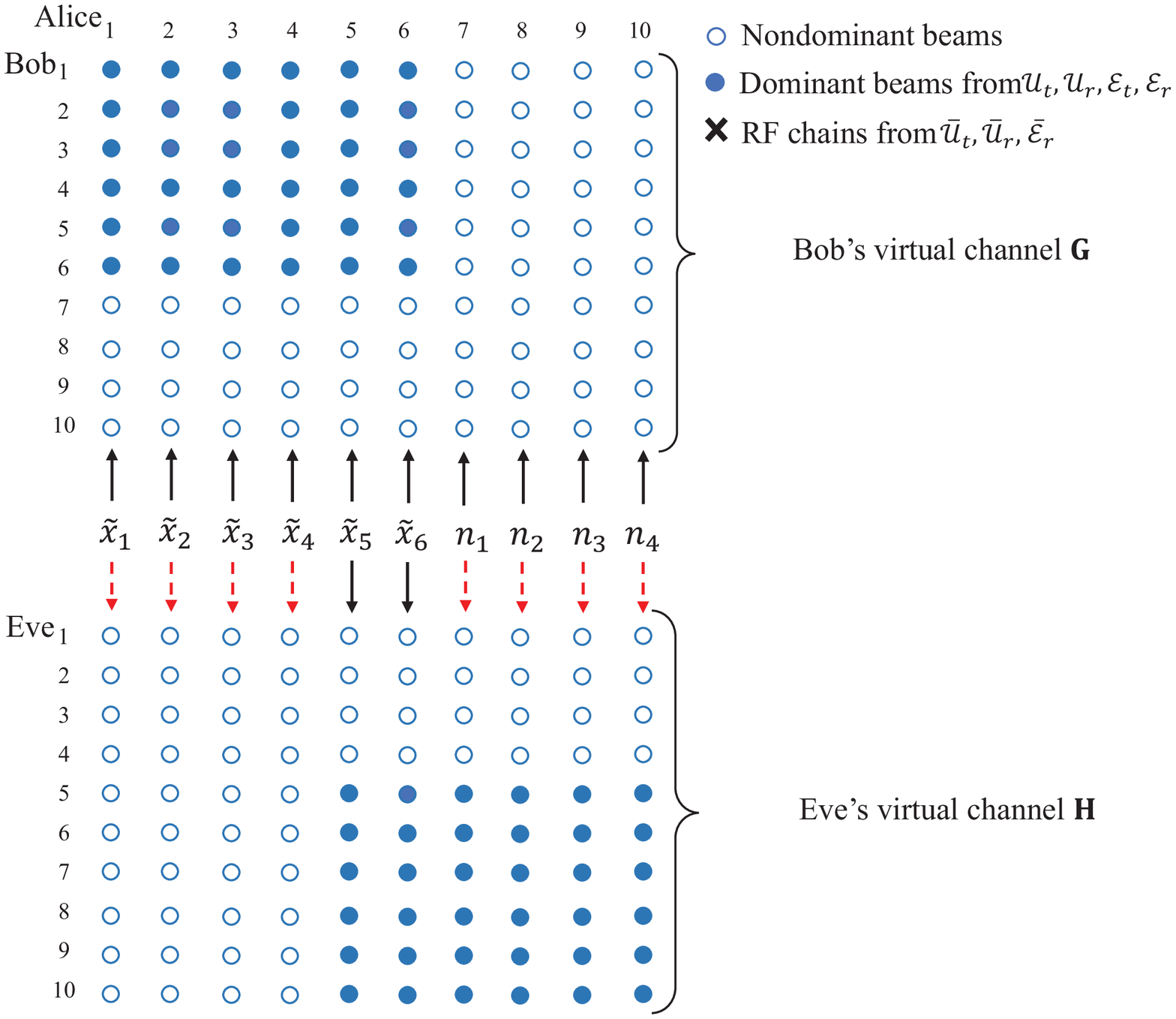}
\label{Channel_simple}
\end{minipage}}
\subfigure[A general illustration of virtual channels.]{
\begin{minipage}[t]{0.35\linewidth}
\centering
\includegraphics[width=1\linewidth]{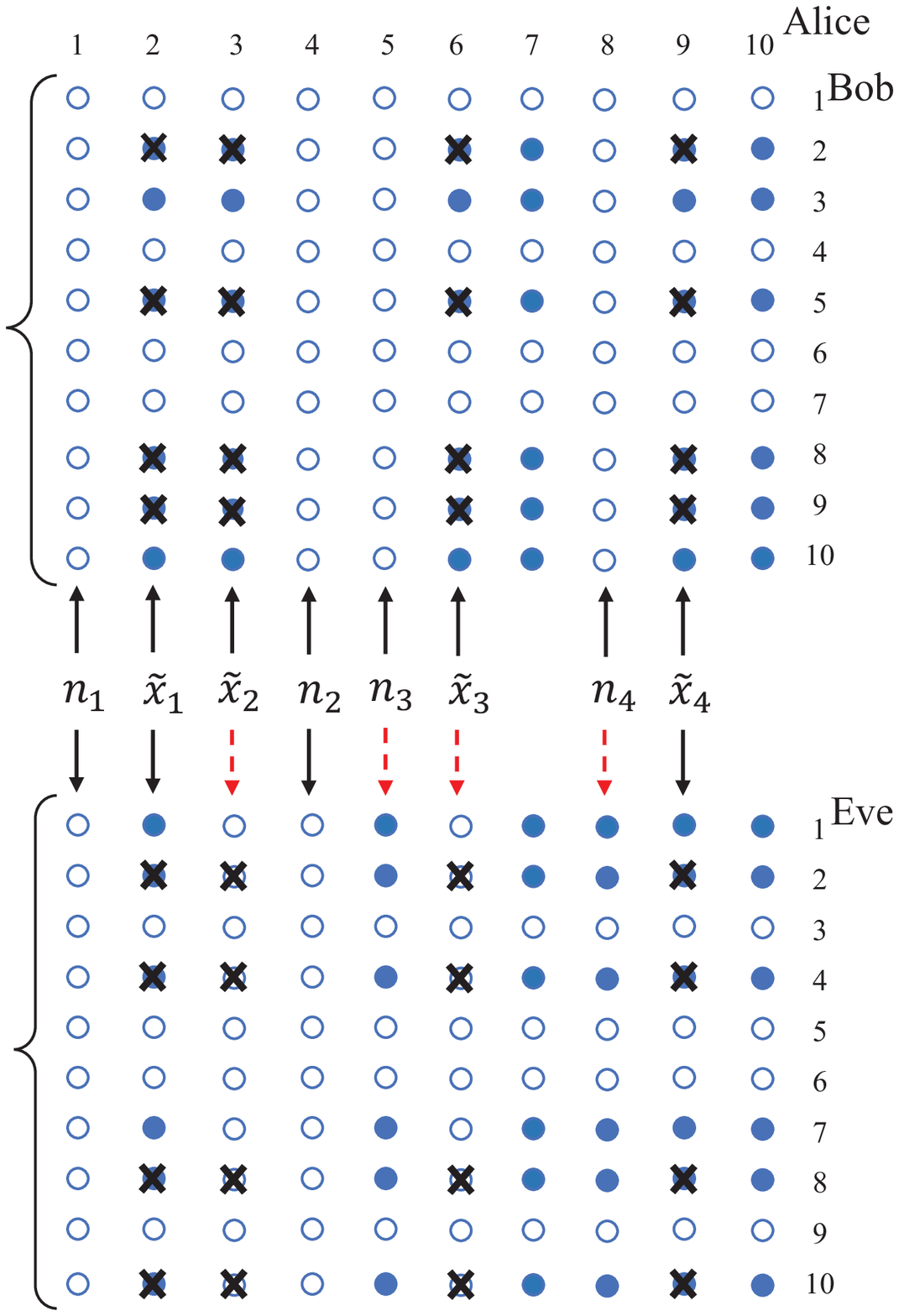}
\label{Channel_general}
\end{minipage}}
\caption{Confidential signals and AN on low-dimensional virtual channels with $N_t=N_r=N_e=10$.}
\label{Channel}
\end{figure*}

Fig. \ref{Channel} shows two examples of the proposed scheme. In Fig.~\ref{Channel_simple}, we present an example of a special case where the number of RF chains is exactly the same as the number of dominant beams in the channel, i.e., $M_t=M_r=M_e=L_t=L_r=L_e=6$ for illustration.
At Alice, the dominant beams are used for transmitting confidential signals, $\tilde{x}_k,~k\in\{1,...,6\}$, to Bob.
The AN, $n_k,~k\in\{1,...,4\}$, is sent through the nondominant beams of $\mathbf{G}$.
For Eve, mismatches exist because the sparsity patterns of $\mathbf{G}$ and $\mathbf{H}$ are distinct (the mismatches are indicated in Fig. \ref{Channel_simple} by dashed arrows). Specifically, signals $\{\tilde{x}_1,...,\tilde{x}_4\}$ are transmitted over nondominant beams of $\mathbf{H}$, therefore it is challenging for Eve to wiretap the information signals.
On the other hand, AN $\{n_1,...,n_4\}$ is sent through the dominant beams of $\mathbf{H}$, which causes significant interference to Eve but not to Bob.
Using this proposed scheme, the capacity of Eve's channel degrades significantly resulting in a potential increase in the secrecy rate.

In general, dominant beams can be dispersive in the angular directions. Also, the number of RF chains for signal transmission and reception can be different from the number of dominant beams.
Fig.~\ref{Channel_general} shows a general scenario, e.g., with $M_t=M_r=M_e=4$ and $L_t=L_r=L_e=6$.
The beam selectors are $\bar{\mathcal{U}}_r=\{2,5,8,9\}\subset\mathcal{U}_r=\{2,3,5,8,9,10\}$, $\bar{\mathcal{U}}_t=\{2,3,6,9\}\subset\mathcal{U}_t=\{2,3,6,7,9,10\}$, $\bar{\mathcal{E}}_r=\{2,4,8,10\}\subset\mathcal{E}_r=\{1,2,4,7,8,10\}$, and $\mathcal{E}_t=\{2,5,7,8,9,10\}$.
Unlike Fig.~\ref{Channel_simple}, neither signal nor noise is transmitted over the $7$th and $10$th dominant beams because of the limited number of RF chains.
The confidential signals and AN are allocated based on the sparsity pattern of $\mathbf{G}$, which is unknown to Eve.
Therefore, the sparsity acts analogously to a secret key which is beneficial for secure communication.

\section{Secrecy Rate Analysis}

In this section, we analyze the ergodic achievable secrecy rate of the considered mm-Wave massive MIMO system.
From \cite{Sec_broadcast}, \cite{Sec_sumrates}, the secrecy capacity is given by
\begin{align}
\label{Cs}
C_S=\max\limits_{\mathbf{s}\rightarrow\widetilde{\mathbf{x}}_V\rightarrow\widetilde{\mathbf{y}}_V, \widetilde{\mathbf{y}}_{E,V}} I\Big(\mathbf{s};\widetilde{\mathbf{y}}_V\Big) - I\Big(\mathbf{s};\widetilde{\mathbf{y}}_{E,V}\Big),
\end{align}
where $I(\cdot;\cdot)$ denotes the mutual information between two random variables.
The secrecy capacity $C_S$ is given by maximizing over all joint distributions such that a Markov chain $\mathbf{s}\rightarrow\widetilde{\mathbf{x}}_V\rightarrow\widetilde{\mathbf{y}}_V, \widetilde{\mathbf{y}}_{E,V}$ is formed.
Assuming that only statistical CSI of $\mathbf{G}$ is known to Alice, the digital precoder is chosen as $\mathbf{W}=\sqrt{\frac{\phi P}{M_t}}\mathbf{I}_{M_t}$ with $M_t=N_s$ \cite{Recon0}, \cite{Recon}, guaranteeing the power constraint in \eqref{P_s}.
For the sake of tractability, we then follow the definition of ergodic achievable secrecy rate in \cite[Lemma~1]{Secure_Zhu_1} and derive an expression for the secrecy rate under the assumption of large $N_t$.


\begin{theorem}
\label{theorem_Rs}
Under the assumptions of Gaussian transmit signalling, $\mathbf{s}$, and large antenna arrays, the ergodic achievable secrecy rate of Bob is
\begin{align}
R_{S}=[R_{U}-C_{E}]^+,
\label{Rs}
\end{align}
where $R_{U}$ is the ergodic achievable rate of Bob given by
\begin{align}
R_{U}=M_t \log_2\left[ 1+\frac{M_r \phi P}{M_t \sigma_n^2} \left(1-\frac{\mathcal{F}(\alpha,\beta)}{4\alpha\beta}\right) \right],
\label{Cu}
\end{align}
and $C_{E}$ is the ergodic capacity of Eve
\begin{align}
C_{E}&=\frac{L_t M_t}{N_t}\log_2\left( 1\!+\!\frac{\phi(N_t-L_t)M_e}{(1-\phi)M_t(a-M_e)b} \right)
\!+\! \frac{(N_t-L_t)M_t}{N_t}\log_2\left( 1\!+\!\frac{\phi(N_t-L_t)M_e\eta}{(1-\phi)M_t(a-M_e)b} \right),
\label{Ce}
\end{align}
where we define
\begin{align}
\label{alpha}
\alpha&\triangleq\frac{(1-\phi)\eta P}{\sigma_n^2},~~~~~~~~~~~~~~~~~~~~~~~~~~~~~~~~
\end{align}
\begin{align}
\label{beta}
\beta&\triangleq\frac{M_r}{N_t-L_t},~~~~~~~~~~~~~~~~~~~~~~~~~~~~~~~~~~~
\end{align}
\begin{align}
\label{func_F}
\mathcal{F}(x,y)&\triangleq \left( \sqrt{x(1+\sqrt{y})^2+1}- \sqrt{x(1-\sqrt{y})^2+1}\right)^2 ,
\end{align}
\begin{align}
\label{a}
a&\triangleq\frac{\left[L_t+\eta(N_t-L_t)\right]^2(N_t-L_t)}{N_t\left[L_t+\eta^2(N_t-L_t)\right]},~~~~~~~~~
\end{align}
\begin{align}
\label{b}
b&\triangleq\frac{L_t+\eta^2(N_t-L_t)}{L_t+\eta(N_t-L_t)}.~~~~~~~~~~~~~~~~~~~~~~~
\end{align}
\end{theorem}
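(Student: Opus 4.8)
The plan is to derive the secrecy rate $R_S = [R_U - C_E]^+$ by separately characterizing the ergodic achievable rate of Bob from the effective channel in \eqref{yv_til} and the ergodic capacity of Eve from \eqref{yev_til}, invoking large-dimensional random matrix theory in the massive-MIMO regime. I would begin with Bob. With the digital precoder $\mathbf{W} = \sqrt{\phi P/M_t}\,\mathbf{I}_{M_t}$ and the AN treated as colored interference-plus-noise, the received model $\widetilde{\mathbf{y}}_V = \bar{\mathbf{G}}\widetilde{\mathbf{x}}_V + \hat{\mathbf{G}}\mathbf{n}_{AN} + \widetilde{\mathbf{n}}_V$ gives a Gaussian MIMO channel. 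The key step is to replace the interference covariance $\mathbb{E}\{\hat{\mathbf{G}}\mathbf{n}_{AN}\mathbf{n}_{AN}^H\hat{\mathbf{G}}^H\} + \sigma_n^2 \mathbf{I}$, which (since the AN is spread isotropically over the $N_t - L_t$ nondominant transmit beams with total power $(1-\phi)P$ and the entries of $\hat{\mathbf{G}}$ have variance $\eta$) concentrates around $\big(\eta(1-\phi)P + \sigma_n^2\big)\mathbf{I}_{M_r}$ for large $N_t$. Then $R_U = \mathbb{E}\log_2 \big| \mathbf{I} + \frac{\phi P}{M_t(\eta(1-\phi)P + \sigma_n^2)} \bar{\mathbf{G}}\bar{\mathbf{G}}^H \big|$, and I would evaluate this expectation using the Marčenko–Pastur law for the $M_r \times M_t$ matrix $\bar{\mathbf{G}}$ with i.i.d.\ unit-variance entries; the closed form of the $\int \log$ against the MP density is exactly what produces the function $\mathcal{F}(\alpha,\beta)$ in \eqref{func_F}, with $\alpha,\beta$ as defined in \eqref{alpha}–\eqref{beta} (note $\alpha = \eta(1-\phi)P/\sigma_n^2$ captures the AN-plus-noise floor and $\beta = M_r/(N_t-L_t)$ is the relevant aspect ratio governing the AN covariance concentration). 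Collecting constants yields \eqref{Cu}.

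For Eve, the model is $\widetilde{\mathbf{y}}_{E,V} = \bar{\mathbf{H}}\widetilde{\mathbf{x}}_V + \hat{\mathbf{H}}\mathbf{n}_{AN}$ with the thermal noise neglected (worst-case). Eve's capacity is $C_E = \mathbb{E}\log_2\big|\mathbf{I} + \frac{\phi P}{M_t}\bar{\mathbf{H}}\bar{\mathbf{H}}^H \mathbf{R}_{AN}^{-1}\big|$ where $\mathbf{R}_{AN} = \mathbb{E}\{\hat{\mathbf{H}}\mathbf{n}_{AN}\mathbf{n}_{AN}^H\hat{\mathbf{H}}^H\}$. The crucial difference from Bob is that, from Eve's viewpoint, the $M_t$ columns indexed by $\bar{\mathcal{U}}_t$ are randomly placed relative to Eve's dominant transmit set $\mathcal{E}_t$: with the sparsity pattern $\mathcal{E}_t$ occupying a fraction $L_t/N_t$ of all transmit beams, a random selected beam lands in $\mathcal{E}_t$ (variance $1$) with probability $\approx L_t/N_t$ and outside (variance $\eta$) with probability $\approx (N_t-L_t)/N_t$. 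This is precisely why $C_E$ in \eqref{Ce} splits into a weighted sum of two $\log$ terms with weights $L_t M_t/N_t$ and $(N_t-L_t)M_t/N_t$, the second carrying the extra factor $\eta$ in its signal term. For the AN covariance at Eve, the $N_t - L_t$ AN beams are likewise split between Eve's dominant and nondominant transmit directions, and averaging the resulting variances gives the effective per-stream AN power, which after normalization produces the parameters $a$ and $b$ in \eqref{a}–\eqref{b} (here $a$ encodes the effective aspect ratio of Eve's AN-carrying channel and $b$ the ratio of second-moment to first-moment of the mixed-variance entries). Since $M_t, M_e$ stay fixed while $N_t \to \infty$ and $L_t$ scales proportionally, $\bar{\mathbf{H}}\bar{\mathbf{H}}^H$ and $\mathbf{R}_{AN}$ both concentrate, so the $\log\det$ separates into a sum of per-beam scalar terms, yielding \eqref{Ce}. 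Finally $R_S = [R_U - C_E]^+$ follows from the achievable secrecy rate definition in \cite[Lemma~1]{Secure_Zhu_1} applied to the wiretap channel \eqref{Cs}.

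The main obstacle, and where I would spend the most care, is the bookkeeping for Eve: correctly tracking how the random offsets $\bar{\mathcal{U}}_t$ vs.\ $\mathcal{E}_t$ and $\bar{\mathcal{E}}_r$ vs.\ the AN-beam set translate into the mixed-variance structure of $\bar{\mathbf{H}}$ and $\hat{\mathbf{H}}$, and then showing that the expectation over these combinatorial selections, together with the Gaussian entries, concentrates onto the deterministic equivalents that give $a$ and $b$. One must argue that in the large-$N_t$ limit the fraction of selected transmit beams falling in $\mathcal{E}_t$ self-averages to $L_t/N_t$ (a hypergeometric-to-deterministic concentration), and that the off-diagonal cross terms in the relevant covariances vanish. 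The Bob side, by contrast, is a comparatively standard application of the Marčenko–Pastur integral once the AN-plus-noise covariance is shown to concentrate to a scaled identity; the only delicate point there is justifying the interchange of expectation and the large-dimensional limit, which is routine given the boundedness assumptions on the spectra. I would also need to verify that the claimed regime $M_r < N_t(1-\rho)$ (implicit in the later high-SNR bound) keeps all arguments of $\mathcal{F}$ and the $\log$ terms well-defined, i.e.\ that $\beta < 1$ and $a > M_e$ hold under the stated sparsity conditions.
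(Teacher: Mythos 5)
Your overall architecture matches the paper's: invoke \cite[Lemma~1]{Secure_Zhu_1} for $R_S=[R_U-C_E]^+$, then compute $R_U$ and $C_E$ separately by large-dimensional concentration. However, there is a genuine gap on Bob's side in where $\mathcal{F}(\alpha,\beta)$ comes from. You replace the AN-plus-noise covariance by its expectation, $\left(\eta(1-\phi)P+\sigma_n^2\right)\mathbf{I}_{M_r}$, and then claim $\mathcal{F}(\alpha,\beta)$ emerges from integrating $\log$ against the Mar\v{c}enko--Pastur density of the $M_r\times M_t$ signal matrix $\bar{\mathbf{G}}$. This cannot reproduce \eqref{Cu}: the aspect ratio $\beta=M_r/(N_t-L_t)$ in \eqref{beta} involves $N_t-L_t$, not $M_t$, so it is a property of the \emph{interference} matrix $\hat{\mathbf{G}}$ (dimension $M_r\times(N_t-L_t)$), not of $\bar{\mathbf{G}}$. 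In the paper's proof (Lemma~\ref{lemma_preliminary_1}), the random matrix $\hat{\mathbf{G}}\hat{\mathbf{G}}^H$ is \emph{not} averaged out; instead the normalized trace of the resolvent $\bigl[\frac{(1-\phi)P}{(N_t-L_t)\sigma_n^2}\hat{\mathbf{G}}\hat{\mathbf{G}}^H+\mathbf{I}_{M_r}\bigr]^{-1}$ is evaluated via the MP $\eta$-transform \cite[Eq.~(1.16)]{matrix}, which is exactly the multiplicative SINR degradation factor $1-\frac{\mathcal{F}(\alpha,\beta)}{4\alpha\beta}$. The signal matrix is then handled by the law of large numbers, $\frac{1}{M_r}\bar{\mathbf{G}}^H\bar{\mathbf{G}}\to\mathbf{I}_{M_t}$, together with $|\mathbf{I}+\mathbf{A}\mathbf{B}|=|\mathbf{I}+\mathbf{B}\mathbf{A}|$ --- there is no MP log-integral over $\bar{\mathbf{G}}$'s spectrum at all. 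Your route would instead yield $M_t\log_2\bigl(1+\frac{M_r\phi P}{M_t(\eta(1-\phi)P+\sigma_n^2)}\bigr)$, i.e., the $\beta\to 0$ limit of \eqref{Cu}, losing the finite-$\beta$ correction that the theorem asserts.

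A secondary but related issue arises on Eve's side. Your qualitative bookkeeping of the mixed-variance structure of $\bar{\mathbf{H}}$ and $\hat{\mathbf{H}}$ (fractions $L_t/N_t$ and $(N_t-L_t)/N_t$, two weighted log terms, $b$ as a second-to-first moment ratio) is correct and matches the paper. But by defining $\mathbf{R}_{AN}=\mathbb{E}\{\hat{\mathbf{H}}\mathbf{n}_{AN}\mathbf{n}_{AN}^H\hat{\mathbf{H}}^H\}$ you would obtain a denominator proportional to $ab$ rather than $(a-M_e)b$ as in \eqref{Ce}. The $-M_e$ correction comes from the paper's Lemma~\ref{lemma_preliminary_2}: $\hat{\mathbf{H}}\hat{\mathbf{H}}^H$ is approximated by a single Wishart $\mathcal{W}_{M_e}(a,b\mathbf{I}_{M_e})$ via matching the first two moments of the sum of the two constituent Wisharts, and then the deterministic equivalent of the \emph{inverse} Wishart, $(\hat{\mathbf{H}}\hat{\mathbf{H}}^H)^{-1}\to\frac{1}{(a-M_e)b}\mathbf{I}_{M_e}$, is applied; $(\mathbb{E}\{\mathbf{X}\})^{-1}\neq\mathbb{E}\{\mathbf{X}^{-1}\}$ here. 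To close the proof you would need to restore both steps: keep the interference Gram matrices random and use their resolvent/inverse deterministic equivalents rather than their means.
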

\begin{proof}
See Appendix~\ref{proof_theorem_Rs}.
\end{proof}

From \eqref{Rs}, at first glance the effects of $L_t$, $\eta$, $\phi$ on the secrecy rate performance are quite complicated.
In order to obtain useful insights for system design, we further characterize the asymptotic behaviour of $R_s$. To this end, we first present some asymptotic results for $R_U$ and $C_E$ in the following three propositions.
For notational brevity, we introduce
\begin{align}
\rho\triangleq\frac{L_t}{N_t} 
\label{rho}
\end{align}
to represent the proportion of dominant beams from the point-of-view of Alice.


\begin{proposition}
At low SNR, the ergodic achievable rate of Bob in \eqref{Cu} can be expressed as
\begin{align}
R_{U}\approx M_t \log_2\left( 1+\frac{M_r \phi P}{M_t \sigma_n^2}  \right) \triangleq R_U^\mathrm{L}.
\label{Cu_low}
\end{align}
\label{proposition_Cu_L}
\end{proposition}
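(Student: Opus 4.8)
The plan is to show that, in the low-SNR regime, the multiplicative correction factor $1 - \frac{\mathcal{F}(\alpha,\beta)}{4\alpha\beta}$ inside the logarithm of \eqref{Cu} tends to $1$, so that $R_{U}$ reduces to $R_U^\mathrm{L}$. Since ``low SNR'' means $P/\sigma_n^2 \to 0$, and since $\beta = M_r/(N_t-L_t)$ in \eqref{beta} does not depend on the SNR while $\alpha = (1-\phi)\eta P/\sigma_n^2$ in \eqref{alpha} is proportional to it, this reduces to proving that $\frac{\mathcal{F}(\alpha,\beta)}{4\alpha\beta} \to 0$ as $\alpha \to 0^+$ with $\beta$ fixed.

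The quantity $\frac{\mathcal{F}(\alpha,\beta)}{4\alpha\beta}$ is an indeterminate $0/0$ form at $\alpha = 0$ because $\mathcal{F}(0,\beta) = 0$, so a short computation is needed. I would rationalize $\mathcal{F}$ via the identity $(\sqrt{A}-\sqrt{B})^2 = (A-B)^2/(\sqrt{A}+\sqrt{B})^2$ with $A = \alpha(1+\sqrt{\beta})^2 + 1$ and $B = \alpha(1-\sqrt{\beta})^2 + 1$. Using $A - B = \alpha\bigl[(1+\sqrt{\beta})^2 - (1-\sqrt{\beta})^2\bigr] = 4\alpha\sqrt{\beta}$, this gives
\begin{align}
\frac{\mathcal{F}(\alpha,\beta)}{4\alpha\beta} = \frac{16\alpha^2\beta}{4\alpha\beta\left(\sqrt{A}+\sqrt{B}\right)^2} = \frac{4\alpha}{\left(\sqrt{A}+\sqrt{B}\right)^2}. \nonumber
\end{align}
As $\alpha \to 0^+$, both $A$ and $B$ tend to $1$, so the denominator tends to $4$ and $\frac{\mathcal{F}(\alpha,\beta)}{4\alpha\beta} = \alpha\bigl(1 + O(\alpha)\bigr) \to 0$. (Equivalently, a first-order Taylor expansion of the two square roots in \eqref{func_F} about $\alpha = 0$ yields $\mathcal{F}(\alpha,\beta) = 4\alpha^2\beta + O(\alpha^3)$, giving the same conclusion.)

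Substituting back into \eqref{Cu}, the correction factor is $1 - \alpha + O(\alpha^2) \to 1$, and therefore $R_{U} \to M_t\log_2\!\left(1 + \frac{M_r \phi P}{M_t \sigma_n^2}\right) = R_U^\mathrm{L}$, which is exactly \eqref{Cu_low}. I expect the only delicate point to be recognizing and resolving the $0/0$ indeterminacy: the bare observation $\mathcal{F}(0,\beta)=0$ does not by itself imply that the ratio vanishes, and one must extract the genuine quadratic leading term $\mathcal{F}(\alpha,\beta)\sim 4\alpha^2\beta$ (by the rationalization above, or a Taylor expansion, or L'Hôpital). Once that is in place the conclusion is immediate, so this is a minor computational subtlety rather than a real obstacle.
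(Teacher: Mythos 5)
Your proposal is correct and follows essentially the same route as the paper: both reduce the claim to showing $\frac{\mathcal{F}(\alpha,\beta)}{4\alpha\beta}\to 0$ as $\alpha\to 0$ with $\beta$ fixed. The paper does this by pulling $\alpha$ out of the square roots and asserting the resulting difference vanishes for $1/\alpha\gg 1$, whereas your rationalization makes the $0/0$ resolution explicit and even yields the leading-order term $\alpha$; this is a more careful presentation of the same limit, not a different argument.
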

\begin{proof}
We consider a low SNR case, i.e., $P\ll\sigma_n^2$, which is a typical scenario in many mm-Wave massive MIMO applications \cite{Heath2016mmWave}, \cite{Ferrante2017mmWave}. Using \eqref{func_F}, it follows that
\begin{align}
\frac{\mathcal{F}(\alpha,\beta)}{4\alpha\beta}&=\frac{1}{4\beta} \left( \sqrt{\left(1\!+\!\sqrt{\beta}\right)^2+\frac{1}{\alpha}}\!-\! \sqrt{\left(1\!-\!\sqrt{\beta}\right)^2+\frac{1}{\alpha}}\right)^2
\approx 0,
\label{F_low}
\end{align}
where we use $\alpha\ll 1$ from \eqref{alpha} for $1-\phi<1$, $\eta<1$, and $P\ll\sigma_n^2$.
By substituting \eqref{F_low} into \eqref{Cu}, the achievable rate of Bob in \eqref{Cu_low} is directly obtained.
\end{proof}

\textbf{Remark 1.}
According to \eqref{xv}, confidential signals are transmitted over $M_t$ dominant beams while AN is sent over $N_t-L_t$ nondominant beams from Alice to Bob.
The sparser the channel is, i.e., for a smaller $L_t$, AN is transmitted on a larger set of nondominant beams.
Then, the AN leakage to Bob statistically becomes larger which decreases the rate of Bob as observed in \eqref{Cu}.
However, this impairment is relatively weak in the low SNR regime.
Observing \eqref{Cu_low}, the ergodic achievable rate of Bob increases with the number of RF chains, $M_r$, and the SNR, $\frac{P}{\sigma_n^2}$, regardless of the number of dominant beams, $L_t$, because the achievable rate is dominated by thermal noise, instead of AN in the low SNR regime.

\begin{proposition}
At high SNR, the ergodic achievable rate of Bob in \eqref{Cu} can be expressed as
\begin{align}
R_{U}\approx M_t \log_2\left( 1+\frac{M_r \phi }{M_t(1-\phi)\eta \left(1-\frac{M_r}{N_t(1-\rho)}\right)}  \right)
\triangleq R_U^\mathrm{H}.
\label{Cu_high}
\end{align}
\label{proposition_Cu_H}
\end{proposition}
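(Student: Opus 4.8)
The plan is to push the exact expression \eqref{Cu} into the high-SNR regime $P \gg \sigma_n^2$. By \eqref{alpha}, since $(1-\phi)\eta$ is a fixed constant, this forces $\alpha \to \infty$, while the prefactor $\frac{M_r\phi P}{M_t\sigma_n^2}$ in \eqref{Cu} diverges at the same rate. The whole game is therefore to show that the correction term $1 - \frac{\mathcal{F}(\alpha,\beta)}{4\alpha\beta}$ decays like $1/\alpha$ and to identify its leading coefficient; a crude bound of the kind used in Proposition~\ref{proposition_Cu_L} will not suffice here, because it would only show that the bracket tends to zero without revealing the rate, and the rate is exactly what survives the multiplication by the diverging prefactor.

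First I would rewrite $\mathcal{F}$ from \eqref{func_F} by pulling out a factor $\sqrt{\alpha}$, so that
\[
\frac{\mathcal{F}(\alpha,\beta)}{4\alpha\beta} = \frac{1}{4\beta}\left(\sqrt{(1+\sqrt{\beta})^2 + \tfrac{1}{\alpha}} - \sqrt{(1-\sqrt{\beta})^2 + \tfrac{1}{\alpha}}\right)^2 .
\]
Assuming $\beta = \frac{M_r}{N_t - L_t} < 1$ — which holds in the large-array regime of Theorem~\ref{theorem_Rs}, since $N_t$ is large while $M_r$ is the (small) number of RF chains at Bob — we have $1-\sqrt{\beta}>0$, and a first-order Taylor expansion of each square root in the small parameter $1/\alpha$ gives $\sqrt{(1\pm\sqrt{\beta})^2 + \tfrac{1}{\alpha}} = (1\pm\sqrt{\beta}) + \frac{1}{2\alpha(1\pm\sqrt{\beta})} + O(\alpha^{-2})$. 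Subtracting, the difference of the two roots equals $2\sqrt{\beta} - \frac{\sqrt{\beta}}{\alpha(1-\beta)} + O(\alpha^{-2})$; squaring and dividing by $4\beta$ then yields $\frac{\mathcal{F}(\alpha,\beta)}{4\alpha\beta} = 1 - \frac{1}{\alpha(1-\beta)} + O(\alpha^{-2})$, i.e. $1 - \frac{\mathcal{F}(\alpha,\beta)}{4\alpha\beta} = \frac{1}{\alpha(1-\beta)} + O(\alpha^{-2})$.

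Substituting this into \eqref{Cu} together with $\alpha = \frac{(1-\phi)\eta P}{\sigma_n^2}$ from \eqref{alpha} and $\beta = \frac{M_r}{N_t - L_t} = \frac{M_r}{N_t(1-\rho)}$ (using \eqref{rho}), the $O(\alpha^{-2})$ remainder multiplied by the $O(\alpha)$ prefactor is $O(\alpha^{-1}) \to 0$, while the leading piece simplifies because the factors $P/\sigma_n^2$ cancel: $\frac{M_r\phi P}{M_t\sigma_n^2}\cdot\frac{1}{\alpha(1-\beta)} = \frac{M_r\phi}{M_t(1-\phi)\eta\left(1 - \frac{M_r}{N_t(1-\rho)}\right)}$, which is exactly the argument of the logarithm in $R_U^\mathrm{H}$ in \eqref{Cu_high}.

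I expect the only delicate point to be the bookkeeping in the expansion: one must retain the cross-term $-\frac{\sqrt{\beta}}{\alpha(1-\beta)}$ when squaring (the naive leading term $(2\sqrt{\beta})^2 = 4\beta$ alone gives $1 - \mathcal{F}/(4\alpha\beta) \to 0$ with no rate information), and one must verify that the discarded $O(\alpha^{-2})$ terms remain negligible after being multiplied by the diverging prefactor. A secondary item worth stating explicitly in the write-up is the standing assumption $\frac{M_r}{N_t(1-\rho)} < 1$, which is needed both for the validity of the square-root expansion and to keep the denominator $1 - \frac{M_r}{N_t(1-\rho)}$ appearing in $R_U^\mathrm{H}$ positive.
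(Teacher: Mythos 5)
Your proposal is correct and follows essentially the same route as the paper: both perform a first-order Taylor expansion of $\frac{\mathcal{F}(\alpha,\beta)}{4\alpha\beta}$ in the small parameter $\nu=1/\alpha$, obtain $1-\frac{\nu}{1-\beta}+o(\nu)$, and then substitute \eqref{alpha}, \eqref{beta}, and \eqref{rho} so that the $P/\sigma_n^2$ factors cancel. The only cosmetic difference is that you expand each square root separately and keep the cross term, whereas the paper differentiates the combined function $\mathcal{G}(\nu,\beta)$ and evaluates at $\nu=0$; both yield the identical leading coefficient $\frac{1}{1-\beta}$.
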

\begin{proof}
Using \eqref{func_F} with $\nu\triangleq\frac{1}{\alpha}$, we define
\begin{align}
\mathcal{G}(\nu,\beta) &\triangleq \frac{\mathcal{F}(1/\nu,\beta)}{4\beta/\nu}
=\frac{1}{4\beta} \left( \sqrt{\left(1\!+\!\sqrt{\beta}\right)^2+\nu}\!-\! \sqrt{\left(1\!-\!\sqrt{\beta}\right)^2+\nu}\right)^2.
\label{G_fun}
\end{align}
For high SNR with $P\gg\sigma_n^2$, i.e., $\nu\ll1$ using \eqref{alpha}, applying a Taylor series expansion yields
\begin{align}
\mathcal{G}(\nu,\beta)&=\mathcal{G} |_{\nu=0}+\frac{\partial \mathcal{G}}{\partial \nu} \Big|_{\nu=0} \nu+\textrm{o}(\nu)
=1-\frac{\nu}{1-\beta}+\textrm{o}(\nu),
\label{G_Taylor}
\end{align}
where $\textrm{o}(\nu)$ is an insignificant higher-order term with respect to (w.r.t.) $\nu$ and we set $\nu=0$ in
\begin{align}
\frac{\partial \mathcal{G}}{\partial \nu}=\frac{-\left( \sqrt{\left(1\!+\!\sqrt{\beta}\right)^2+\nu}\!-\! \sqrt{\left(1\!-\!\sqrt{\beta}\right)^2+\nu}\right)^2}{4\beta\sqrt{\left(1\!+\!\sqrt{\beta}\right)^2+\nu}\sqrt{\left(1\!-\!\sqrt{\beta}\right)^2+\nu}}.
\end{align}

By substituting \eqref{G_fun} and \eqref{G_Taylor} into \eqref{Cu}, it follows that
\begin{align}
R_U= M_t \log_2\left( 1+\frac{M_r \phi P }{M_t \sigma_n^2 } \left[\frac{\nu}{1-\beta}-\textrm{o}(\nu)\right]  \right)
\approx M_t \log_2\left( 1+\frac{M_r \phi P \nu}{M_t \sigma_n^2(1-\beta)}  \right).
\label{Cu_ter}
\end{align}
We complete the proof by substituting \eqref{alpha}, \eqref{beta}, and \eqref{rho} into \eqref{Cu_ter}.
\end{proof}

\textbf{Remark 2.}
From \eqref{Cu_high}, $R_U^\mathrm{H}$ increases with $\rho$, which coincides with the intuition that the presence of more dominant beams is beneficial to the rate of Bob without considering security. On the other hand, it is observed from \eqref{Cu_high} that $R_U^\mathrm{H}$ decreases with $\eta$. A larger $\eta$ means that the nondominant beams degrade less significantly compared to the dominant ones, which results in exceeding large AN interference to Bob.
Unlike the low SNR case in \eqref{Cu_low}, $R_U^\mathrm{H}$ is eventually limited by an upper bound with increasing SNR because the interference caused by AN dominates the thermal noise at high SNR.


The above two propositions characterize the asymptotic rate of Bob. In order to analyze the asymptotic behaviour of the ergodic achievable secrecy rate, we also need the following proposition that reveals the asymptotic ergodic capacity of Eve.

\begin{proposition}
Under the assumption of large antenna arrays with $N_t\gg M_e$, the ergodic capacity of Eve in \eqref{Ce} can be expressed as
\begin{align}
C_E\!=\!& \underbrace{ M_t \rho \log_2\!\left(\!\! 1+\!\frac{\phi M_e}{(\!1\!-\!\phi\!)M_t\left[\rho+\eta(1-\rho)\right]} \!\right) }_{T_1}
\!+\!\underbrace{ M_t(1\!-\!\rho) \log_2\!\!\left(\!\! 1\!\!+\!\!\frac{\phi M_e\eta}{(\!1\!-\!\phi\!)M_t\!\left[\rho+\eta(1-\rho)\right]} \!\right)\!}_{T_2}.
\label{Ce_app}
\end{align}
\label{proposition_Ce}
\end{proposition}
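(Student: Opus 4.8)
The plan is to start from the exact expression \eqref{Ce} and simplify the composite quantity $(a-M_e)b$ that appears in the denominators of both logarithmic terms, using the large-array assumption $N_t\gg M_e$ together with the substitutions $L_t=\rho N_t$ and $N_t-L_t=(1-\rho)N_t$ implied by \eqref{rho}.

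First I would compute the product $ab$ directly from the definitions \eqref{a} and \eqref{b}. The factor $L_t+\eta^2(N_t-L_t)$ in the denominator of $a$ cancels the identical factor in the numerator of $b$, leaving
\begin{align}
ab=\frac{\left[L_t+\eta(N_t-L_t)\right](N_t-L_t)}{N_t}=N_t(1-\rho)\left[\rho+\eta(1-\rho)\right].
\label{ab_prop3}
\end{align}
Moreover, from \eqref{a} one has $a=\frac{\left[\rho+\eta(1-\rho)\right]^2(1-\rho)}{\rho+\eta^2(1-\rho)}N_t$, which is $\Theta(N_t)$ when $\rho$ and $\eta$ are held fixed, while $b$ from \eqref{b} stays bounded away from $0$ and $\infty$. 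Hence $M_e b$ is of lower order than $ab$ under $N_t\gg M_e$, and $(a-M_e)b\approx N_t(1-\rho)\left[\rho+\eta(1-\rho)\right]$.

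Next I would substitute this approximation together with $N_t-L_t=(1-\rho)N_t$ into \eqref{Ce}. The common factor $(1-\rho)N_t$ then cancels between numerator and denominator inside each logarithm, so the first argument collapses to $\frac{\phi M_e}{(1-\phi)M_t\left[\rho+\eta(1-\rho)\right]}$ and the second, which carries an extra $\eta$, to $\frac{\phi M_e\eta}{(1-\phi)M_t\left[\rho+\eta(1-\rho)\right]}$. Finally the prefactors simplify as $\frac{L_tM_t}{N_t}=M_t\rho$ and $\frac{(N_t-L_t)M_t}{N_t}=M_t(1-\rho)$, which assembles exactly into \eqref{Ce_app}.

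The only delicate point --- and the step I would expect to require the most care --- is the replacement $(a-M_e)b\approx ab$: one must confirm that $a$ genuinely scales linearly in $N_t$ so that the dropped term $M_e b$ is asymptotically negligible, and that $b$ remains bounded, both of which follow from keeping $\rho$ and $\eta$ fixed as $N_t$ grows. Everything else is the algebraic cancellation in \eqref{ab_prop3} and routine substitution, so I do not anticipate any further obstacles.
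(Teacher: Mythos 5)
Your proposal is correct and follows essentially the same route as the paper: both exploit the cancellation $ab=N_t(1-\rho)\left[\rho+\eta(1-\rho)\right]$ and then discard the $M_e b$ correction (in the paper, the explicitly written term $\frac{M_e[\rho+\eta^2(1-\rho)]}{N_t(1-\rho)[\rho+\eta(1-\rho)]}$), which vanishes for $N_t\gg M_e$. The remaining substitutions $L_t=\rho N_t$ and $N_t-L_t=(1-\rho)N_t$ and the resulting prefactor simplifications are identical in both arguments.
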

\begin{proof}
Substituting \eqref{a}, \eqref{b}, and \eqref{rho} into \eqref{Ce}, we have
\begin{align}
C_E\!=&M_t \rho \log_2\!\left(\! 1+\!\frac{\phi M_e}{(1-\phi)M_t\left(\rho+\eta(1-\rho)-\frac{M_e[\rho+\eta^2(1-\rho)]}{N_t(1-\rho)[\rho+\eta(1-\rho)]}\right)} \right)
\nonumber\\
&+M_t(1-\rho)\log_2\!\left(\! 1\!+\!\frac{\phi M_e\eta}{(1-\phi)M_t\left(\rho+\eta(1-\rho)-\frac{M_e[\rho+\eta^2(1-\rho)]}{N_t(1-\rho)[\rho+\eta(1-\rho)]}\right)} \right)
\label{Ce_ter}
.
\end{align}
For $N_t\gg M_e$, it follows that $\frac{M_e[\rho+\eta^2(1-\rho)]}{N_t(1-\rho)[\rho+\eta(1-\rho)]} \rightarrow 0$.
Substituting this into \eqref{Ce_ter} yields \eqref{Ce_app}.
\end{proof}

In \eqref{Ce_app}, the term $T_1$ corresponds to the capacity component contributed by the dominant transmit beams for Eve's channel, while $T_2$ corresponds to that of the nondominant beams.
The effect of $\rho$ within the logarithmic function represents the impact of channel sparsity on the equivalent SNR.
From the denominator in \eqref{Ce_app}, $(1-\phi)M_t\left[\rho+\eta(1-\rho)\right]$, we observe that a proportion, $\rho$, of AN is transmitted over the dominant beams while the rest $1-\rho$ is sent through the nondominant beams which experience a degraded channel gain $\eta$.

Moreover, from \eqref{Ce_app}, we observe that $C_E$ first decreases and then increases with $\rho$. 
This indicates that compared to non-sparse channels with $\rho=1$, the sparsity degrades the ergodic capacity of Eve because the confidential signals are likely to be sent through the nondominant beams of the sparse channel.
However, when the channel is even sparser with a small $\rho$, i.e., $L_t\rightarrow M_t$, the ergodic capacity of Eve increases slightly.
This is because there are insufficient dominant beams for transmitting AN which reduces the effectiveness of the AN for combating the channel of Eve.


\section{Effects of Sparsity on Secrecy Rate}

This section presents two tight bounds for the ergodic achievable secrecy rate in \eqref{Rs} using Propositions~\ref{proposition_Cu_L}-\ref{proposition_Ce}. The derived bounds allow us to quantitatively characterize the effects of the channel sparsity on the proposed secure mm-Wave MIMO system.

\subsection{New Definitions of Statistics of Channel Sparsity}

In order to characterize the effect of channel sparsity on secrecy rate, we need to find an effective way of quantitatively measuring the sparsity. In our study, we model the degrees of spatial sparsity from two aspects.
The first one measures the sparsity in the angle domain, $\rho$, while the other measures the sparsity in the power domain, $\eta$.
The former $\rho=\frac{L_t}{N_t}$ measures the number of dominant beams $L_t\in[M_t,N_t]$, quantifying how sparse the spatial channel is.
If the channel gain of the nondominant beams is weak and can be neglected, then $\rho$ represents the available degrees of freedom in Bob's channel.
More specifically, when $\rho\rightarrow \frac{M_t}{N_t}~(L_t\rightarrow M_t)$, the channel is severely sparse and the RF chains connect with almost all the dominant beams.
On the other hand, when $\rho\rightarrow 1~(L_t\rightarrow N_t)$, the sparsity is insignificant and only a small proportion of the dominant beams are connected to the RF chains.
The other sparsity indicator $\eta\in(0,1)$ quantifies how much the nondominant beams differ from the dominant ones in the power domain.
When $\eta\rightarrow 0$, the sparsity is severe, while for $\eta\rightarrow 1$, the sparsity is much less pronounced.
From the derived results, we find it useful to define new metrics, $\chi_\mathrm{L}$ and $\chi_\mathrm{H}$, as two effective measures of the channel sparsity for analyzing the secrecy rate.
Since the two metrics are directly connected to the quantitative effect of channel sparsity on secrecy rate, we first present their definitions as well as some properties in the following.

\begin{definition}
The terms
\begin{align}
\chi_\mathrm{L}(\rho,\eta) \triangleq \eta^{\rho-1}\left[\eta+(1-\eta)\rho\right],
\label{g1}
\end{align}
and
\begin{align}
\chi_\mathrm{H}(\rho,\eta) & \triangleq \eta^{\rho-2}\left[\eta+(1-\eta)\rho\right]\left(\!1\!-\!\frac{M_r}{N_t(1-\rho)}\!\right)^{-1},
\label{g2}
\end{align}
are defined as quantitative measures of sparsity for low and high SNRs, respectively.
\label{def}
\end{definition}

The contribution of channel sparsity to the secrecy rate is determined solely by $\chi_\mathrm{L}$ and $\chi_\mathrm{H}$ for low and high SNRs, respectively.
Comparing $\chi_\mathrm{H}$ in \eqref{g2} to $\chi_\mathrm{L}$ in \eqref{g1}, we find that
\begin{align}
\chi_\mathrm{H}(\rho,\eta)=\frac{1}{\eta}\chi_\mathrm{L}(\rho,\eta),
\end{align}
which comes from the fact that $\frac{M_r}{N_t(1-\rho)}\approx 0$ for large $N_t$ and a small-to-moderate value of $\rho$.
This is reasonable because at low SNR, the achievable rate is mainly determined by the thermal noise while at high SNR the dominating interference is due to AN which experiences a channel gain of $\eta$.

In the next subsection, we will show that the ergodic achievable secrecy rate is an increasing function of $\chi_\mathrm{L}~(\chi_\mathrm{H})$ and the effects of channel sparsity on the secrecy rate are therefore completely determined by the metrics $\chi_\mathrm{L}$ and $\chi_\mathrm{H}$.

\subsection{Effects of Channel Sparsity}

Given the above definitions, in the following theorem we characterize the effects of channel sparsity on the secrecy rate by deriving two bounds for the ergodic achievable secrecy rate in \eqref{Rs} at low and high SNRs.

\begin{theorem}
\label{theorem_Rs_bound}
Assuming the existence of a powerful Eve with $M_e\gg M_t$, the upper bounds for the ergodic achievable secrecy rate at low and high SNRs are, respectively,
\begin{align}
\bar{R}_S^\mathrm{L}=&M_t \left[  \log_2\left( 1+\frac{M_r \phi P}{M_t \sigma_n^2}  \right)
-  \log_2\left(\frac{\phi M_e}{(1-\phi)M_t} \!\right)
+ \log_2 \chi_\mathrm{L} \right]^+,
\label{Rs_low_bound}
\end{align}
and
\begin{align}
\bar{R}_S^\mathrm{H}=&M_t \left[  \log_2\!\!\left(\frac{M_r }{M_e }  \!\right)
+\log_2\chi_\mathrm{H} \right]^+\!.~~~~~~~~~~~~~~~~~~~~~~~~~~~~~~~~~~~
\label{Rs_high_bound}
\end{align}
\end{theorem}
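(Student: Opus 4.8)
The plan is to derive each bound by combining the relevant pair of propositions from Section~III under the stated assumption $M_e\gg M_t$, which will also guarantee that the quantities inside $[\cdot]^+$ simplify to recognizable forms. First I would handle the low-SNR bound $\bar R_S^{\mathrm L}$. By definition $R_S=[R_U-C_E]^+$ from Theorem~\ref{theorem_Rs}, and since $R_U\le R_U^{\mathrm L}$ is not what we want (we want an upper bound on $R_S$, so we need $R_U\le(\text{something})$ and $C_E\ge(\text{something})$), I would instead bound $R_S\le[R_U^{\mathrm L}-C_E^{\mathrm{lower}}]^+$. Using Proposition~\ref{proposition_Cu_L}, $R_U\approx R_U^{\mathrm L}=M_t\log_2\!\left(1+\frac{M_r\phi P}{M_t\sigma_n^2}\right)$. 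For the eavesdropper, Proposition~\ref{proposition_Ce} gives $C_E=T_1+T_2$; since $T_2\ge0$, we have $C_E\ge T_1=M_t\rho\log_2\!\left(1+\frac{\phi M_e}{(1-\phi)M_t[\rho+\eta(1-\rho)]}\right)$. Then invoking $M_e\gg M_t$ lets me drop the $1+$ inside the logarithm of $T_1$ (and similarly inside $T_2$), so $C_E\approx M_t\rho\log_2\!\frac{\phi M_e}{(1-\phi)M_t[\rho+\eta(1-\rho)]}+M_t(1-\rho)\log_2\!\frac{\phi M_e\eta}{(1-\phi)M_t[\rho+\eta(1-\rho)]}$.

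The key algebraic step is then to collect the two logarithms of $C_E$ into $M_t\log_2\!\frac{\phi M_e}{(1-\phi)M_t}+M_t\log_2\!\frac{\eta^{1-\rho}}{\rho+\eta(1-\rho)}$, using $\rho+(1-\rho)=1$ to combine the leading factors and $\eta^{(1-\rho)}$ from the $T_2$ term. Subtracting from $R_U^{\mathrm L}$, the term $-M_t\log_2\!\frac{\eta^{1-\rho}}{\rho+\eta(1-\rho)}=M_t\log_2\!\frac{\rho+\eta(1-\rho)}{\eta^{1-\rho}}=M_t\log_2\!\big(\eta^{\rho-1}[\eta+(1-\rho)\eta\cdots]\big)$ — more precisely $=M_t\log_2\big(\eta^{\rho-1}[\rho+\eta(1-\rho)]\big)$. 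Rewriting $\rho+\eta(1-\rho)=\eta+(1-\eta)\rho$ identifies this exactly as $M_t\log_2\chi_{\mathrm L}(\rho,\eta)$ by Definition~\ref{def}, which yields \eqref{Rs_low_bound} after reinstating $[\cdot]^+$.

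For the high-SNR bound $\bar R_S^{\mathrm H}$, I would follow the same template but use Proposition~\ref{proposition_Cu_H}: at high SNR the AN interference dominates, so $R_U\approx R_U^{\mathrm H}=M_t\log_2\!\left(1+\frac{M_r\phi}{M_t(1-\phi)\eta(1-\frac{M_r}{N_t(1-\rho)})}\right)$; since $M_e\gg M_t$ and the other parameters are moderate, the ``$1+$'' is negligible here too, giving $R_U^{\mathrm H}\approx M_t\log_2\!\frac{M_r\phi}{M_t(1-\phi)\eta(1-\frac{M_r}{N_t(1-\rho)})}$. Using the same lower bound on $C_E$ as above, the difference $R_U^{\mathrm H}-C_E$ becomes $M_t\log_2\!\frac{M_r\phi}{M_t(1-\phi)\eta(1-\frac{M_r}{N_t(1-\rho)})}-M_t\log_2\!\frac{\phi M_e}{(1-\phi)M_t}-M_t\log_2\!\frac{\eta^{1-\rho}}{\rho+\eta(1-\rho)}$; the $\frac{\phi}{(1-\phi)M_t}$ factors cancel, leaving $M_t\log_2\!\frac{M_r}{M_e}+M_t\log_2\!\frac{\rho+\eta(1-\rho)}{\eta^{2-\rho}}\left(1-\frac{M_r}{N_t(1-\rho)}\right)^{-1}$, and the second logarithm is exactly $M_t\log_2\chi_{\mathrm H}(\rho,\eta)$ by \eqref{g2}. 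Applying $[\cdot]^+$ gives \eqref{Rs_high_bound}.

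The main obstacle I anticipate is making the two approximation steps rigorous enough to genuinely deliver \emph{upper} bounds rather than mere asymptotic equalities: dropping $T_2\ge0$ is a legitimate inequality, but replacing $\log_2(1+z)$ by $\log_2 z$ in $C_E$ goes the \emph{wrong} direction for an upper bound on $R_S$, so I would need to argue that under $M_e\gg M_t$ the error is an $o(1)$ term that can be absorbed, or else state the result as holding asymptotically/approximately (which is consistent with how Propositions~\ref{proposition_Cu_L}--\ref{proposition_Ce} are phrased with ``$\approx$''). Likewise, using $R_U\approx R_U^{\mathrm L}$ (resp. $R_U^{\mathrm H}$) rather than a true one-sided bound must be justified via the low-/high-SNR regime already fixed in Propositions~\ref{proposition_Cu_L} and \ref{proposition_Cu_H}. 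The remaining work — collecting logarithms, substituting $\rho+\eta(1-\rho)=\eta+(1-\eta)\rho$, and matching against Definition~\ref{def} — is routine bookkeeping.
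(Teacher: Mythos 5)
Your derivation matches the paper's proof essentially step for step: substitute Proposition~\ref{proposition_Cu_L} (resp.\ Proposition~\ref{proposition_Cu_H}) together with Proposition~\ref{proposition_Ce} into Theorem~\ref{theorem_Rs}, drop the ``$1+$'' inside the logarithms of $C_E$, and collect terms to expose $\chi_\mathrm{L}$ (resp.\ $\chi_\mathrm{H}$). The one point you have backwards is the worry in your closing paragraph: replacing $\log_2(1+z)$ by $\log_2 z$ in $C_E$ is a one-sided inequality in the \emph{right} direction --- it lower-bounds $C_E$ and hence upper-bounds $R_S=[R_U-C_E]^+$ --- and this is precisely the ``$\leq$'' step the paper uses to turn the expression into a bound, with $M_e\gg M_t$ serving only to make it tight rather than to justify the direction. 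The only steps that are genuine approximations rather than one-sided bounds are the use of $R_U\approx R_U^\mathrm{L}$ ($R_U^\mathrm{H}$) from the propositions and, in the high-SNR case, dropping the ``$1+$'' in $R_U^\mathrm{H}$, which the paper justifies by invoking $M_r\gg M_t$ (not $M_e\gg M_t$, which does not appear in $R_U^\mathrm{H}$).
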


\begin{proof}
See Appendix~\ref{proof_lemma_Rs}.
\end{proof}

In \eqref{Rs_low_bound}, the first term in the brackets represents the ergodic achievable rate of Bob, the second term represents the ergodic capacity of Eve without consideration of the sparsity, and the third term captures the additional secrecy rate bonus due to exploiting the sparsity.
Similarly, the first term in the bracket in \eqref{Rs_high_bound} represents the ergodic achievable secrecy rate without sparsity and the second term represents the secrecy rate bonus due to spatial sparsity.
Using \eqref{g1} and \eqref{g2}, it can be easily verified that $\chi_\mathrm{H}>\chi_\mathrm{L}\geq 1$ for $\rho\in\left[\frac{M_t}{L_t},1\right]$ and $\eta\in(0,1)$. Hence, the secrecy rate bonus is always nonnegative.


From Theorem~\ref{theorem_Rs_bound}, it is generally difficult to give an exact elaboration on the secrecy rate bonus of the proposed secure transmission scheme.
Here, we give an intuitive description from the view of uncertainty, i.e., ``entropy'', in terms of channel sparsity.
In mmWave secure communication, it is intuitive to interpret the secrecy rate bonus of the proposed scheme as coming mainly from the mismatch between the sparsity patterns $\mathcal{U}$ and $\mathcal{E}$.
For each RF chain at Alice, the transmit beam direction may differ in the beam types for Bob and Eve, i.e., dominant beams aligned with one terminal may serve as nondominant beams for the other.
Assuming an equal probability of a match or mismatch for each pair of transmit beams, we can interpret this probability as $\frac{1}{\chi_\mathrm{L}}$ and $\frac{1}{\chi_\mathrm{H}}$ for low and high SNRs respectively.
Then, the corresponding secrecy rate bonus, $M_t\log_2(\chi_\mathrm{L})$ and $M_t\log_2(\chi_\mathrm{H})$, represents the additional uncertainty contributed by channel sparsity.


On the other hand, for fixed system parameters including the number of RF chains, system SNR, and $\phi$, Theorem~\ref{theorem_Rs_bound} implies that the ergodic achievable secrecy rate depends only on sparsity parameters $\rho$ and $\eta$.
The following theorem quantitatively characterizes the difference in secrecy rate due to channels with different levels of sparsity.

\begin{corollary}
\label{corollary_R_gap}
For two channels with respective sparsity parameters $(\rho_1,\eta_1)$ and $(\rho_2,\eta_2)$, the ergodic achievable secrecy rate gap between the two channels is, for low SNR,
\begin{align}
\Delta R_S^\mathrm{L}(\rho_1,\eta_1;\rho_2,\eta_2)=M_t\log_2\frac{\chi_\mathrm{L}(\rho_1,\eta_1)}{\chi_\mathrm{L}(\rho_2,\eta_2)},
\label{Rs_gap_low}
\end{align}
and for high SNR,
\begin{align}
\Delta R_S^\mathrm{H}(\rho_1,\eta_1;\rho_2,\eta_2)=M_t\log_2\frac{\chi_\mathrm{H}(\rho_1,\eta_1)}{\chi_\mathrm{H}(\rho_2,\eta_2)}.
\label{Rs_gap_high}
\end{align}
\end{corollary}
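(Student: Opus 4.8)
The plan is to read this off directly from the two secrecy-rate upper bounds in Theorem~\ref{theorem_Rs_bound}. The key structural observation is that both $\bar{R}_S^\mathrm{L}$ in \eqref{Rs_low_bound} and $\bar{R}_S^\mathrm{H}$ in \eqref{Rs_high_bound} split, inside the $[\,\cdot\,]^+$, into a term that depends only on the fixed system parameters $M_t$, $M_r$, $M_e$, $\phi$, $P$, $\sigma_n^2$ and a single sparsity-dependent term, namely $\log_2\chi_\mathrm{L}$ (respectively $\log_2\chi_\mathrm{H}$). Concretely I would write $\bar{R}_S^\mathrm{L}=M_t\left[c_\mathrm{L}+\log_2\chi_\mathrm{L}(\rho,\eta)\right]^+$ with $c_\mathrm{L}\triangleq\log_2\!\left(1+\frac{M_r\phi P}{M_t\sigma_n^2}\right)-\log_2\!\frac{\phi M_e}{(1-\phi)M_t}$, and $\bar{R}_S^\mathrm{H}=M_t\left[c_\mathrm{H}+\log_2\chi_\mathrm{H}(\rho,\eta)\right]^+$ with $c_\mathrm{H}\triangleq\log_2\!\frac{M_r}{M_e}$. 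Both $c_\mathrm{L}$ and $c_\mathrm{H}$ are manifestly free of $\rho$ and $\eta$.

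Next I would invoke the standing hypothesis that the two channels being compared both admit positive secrecy, i.e., each bracketed argument is nonnegative, so the $[\,\cdot\,]^+$ clipping is inactive. Then $\bar{R}_S^\mathrm{L}(\rho_i,\eta_i)=M_t\left[c_\mathrm{L}+\log_2\chi_\mathrm{L}(\rho_i,\eta_i)\right]$ for $i=1,2$, and subtracting cancels the common $M_t c_\mathrm{L}$, leaving $\Delta R_S^\mathrm{L}=M_t\left[\log_2\chi_\mathrm{L}(\rho_1,\eta_1)-\log_2\chi_\mathrm{L}(\rho_2,\eta_2)\right]=M_t\log_2\frac{\chi_\mathrm{L}(\rho_1,\eta_1)}{\chi_\mathrm{L}(\rho_2,\eta_2)}$, which is \eqref{Rs_gap_low}. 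Repeating the identical subtraction with $c_\mathrm{H}$ and $\chi_\mathrm{H}$ yields \eqref{Rs_gap_high}. Thus the argument is essentially one line of algebra once the bounds of Theorem~\ref{theorem_Rs_bound} are in hand.

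The one point requiring genuine care — and the step I expect to be the main obstacle — is the handling of the $[\,\cdot\,]^+$ operator. The clean difference formula is valid only when both channels lie in the regime where the argument of $[\,\cdot\,]^+$ is nonnegative; since $c_\mathrm{L}$ and $c_\mathrm{H}$ can be negative for a sufficiently powerful Eve (consistent with the $M_e\gg M_t$ assumption of Theorem~\ref{theorem_Rs_bound}), this is a real constraint rather than a triviality, and I would state it explicitly, noting that outside this regime one secrecy rate saturates to zero and the gap degenerates to the remaining positive secrecy rate. Finally, as a remark I would point out that since $\chi_\mathrm{H}(\rho,\eta)\approx\frac{1}{\eta}\chi_\mathrm{L}(\rho,\eta)$, the high-SNR gap satisfies $\Delta R_S^\mathrm{H}\approx M_t\log_2\frac{\eta_2}{\eta_1}+\Delta R_S^\mathrm{L}$, which makes transparent the earlier claim that power-domain sparsity always contributes to the secrecy rate.
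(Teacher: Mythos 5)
Your proposal is correct and matches the paper's (implicit) argument exactly: the paper offers no separate proof of Corollary~\ref{corollary_R_gap}, simply reading it off from the decomposition of $\bar{R}_S^\mathrm{L}$ and $\bar{R}_S^\mathrm{H}$ in Theorem~\ref{theorem_Rs_bound} into a sparsity-independent constant plus $M_t\log_2\chi_\mathrm{L}$ (resp. $M_t\log_2\chi_\mathrm{H}$), so that subtraction cancels the constant. Your explicit caveat about the $[\,\cdot\,]^+$ clipping being inactive for both channels is a condition the paper leaves tacit, and is worth stating.
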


The accuracy of Corollary~\ref{corollary_R_gap} will be verified by numerical results in Section~\uppercase\expandafter{\romannumeral5}.
From \eqref{Rs_gap_low} and \eqref{Rs_gap_high}, the proposed metrics $\chi_\mathrm{L}$ and $\chi_\mathrm{H}$ accurately evaluate the secrecy rate gap for two different sparse channels.

\subsection{Optimal Sparsity}

In the following, we discuss the effect of the sparsity parameters, $\rho$ and $\eta$, on $\chi_\mathrm{L}$ and $\chi_\mathrm{H}$, which equivalently describes their impact on the ergodic achievable secrecy rate.

\subsubsection{Effect of $\rho$}
When the difference between the nondominant and dominant beams is insignificant for large $\eta\rightarrow 1$, the effect of the proportion of the dominant beams, i.e., $\rho$, on the secrecy rate is not pronounced. On the other hand, for small $\eta\rightarrow 0$, the nondominant beams differ significantly from the dominant ones and thus the effect of $\rho$ is pronounced.
From the definitions in \eqref{g1} and \eqref{g2}, we show that there exists a value of the sparsity parameter $\rho$ that maximizes the secrecy rate of the system.
In order to analyze the effect of $\rho$ on the secrecy rate in the angle domain, we consider the following optimization problem
\begin{align}
\max\limits_{\rho}~~ \bar{R}_S^\mathrm{L}\left(\bar{R}_S^\mathrm{H}\right)= \max\limits_{\rho}~~\chi_\mathrm{L}\left(\chi_\mathrm{H}\right),
\label{max_gL}
\end{align}
where we exploit the fact that $\bar{R}_S^\mathrm{L}$ and $\bar{R}_S^\mathrm{H}$ monotonically increase with $\chi_\mathrm{L}$ and $\chi_\mathrm{H}$, respectively.
The closed-form solution to \eqref{max_gL} is given in Lemma~\ref{lemma_rho_star}.

\begin{lemma}
The optimal $\rho^*$ which maximizes $\chi_\mathrm{L}\left(\chi_\mathrm{H}\right)$ is
\begin{align}
\rho^*=-\frac{1}{\ln\eta}-\frac{\eta}{1-\eta},
\label{rho_star}
\end{align}
and the optimal $L_t^*$ can be obtained by using the relationship between $\rho^*$ and $L_t^*$ in \eqref{rho}.
\label{lemma_rho_star}
\end{lemma}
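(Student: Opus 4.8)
The plan is to treat $\eta\in(0,1)$ as fixed and maximize over $\rho$ by a one-variable first-order argument, after first reducing the high-SNR case to the low-SNR one. The first step is to recall the identity $\chi_\mathrm{H}(\rho,\eta)=\frac{1}{\eta}\chi_\mathrm{L}(\rho,\eta)$ noted after Definition~\ref{def} (valid because $\frac{M_r}{N_t(1-\rho)}\approx 0$ for large $N_t$): since $\frac{1}{\eta}$ does not depend on $\rho$, the two optimizations in \eqref{max_gL} have the same maximizer, so it suffices to maximize $\chi_\mathrm{L}$.

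Next I would pass to logarithms. On the relevant range we have $\eta^{\rho-1}>0$ and $\eta+(1-\eta)\rho>0$, so $\chi_\mathrm{L}(\rho,\eta)>0$ and maximizing it is equivalent to maximizing
\[
\psi(\rho)\triangleq\ln\chi_\mathrm{L}(\rho,\eta)=(\rho-1)\ln\eta+\ln\!\left[\eta+(1-\eta)\rho\right].
\]
Differentiating gives $\psi'(\rho)=\ln\eta+\frac{1-\eta}{\eta+(1-\eta)\rho}$ and $\psi''(\rho)=-\frac{(1-\eta)^2}{\left[\eta+(1-\eta)\rho\right]^2}<0$, so $\psi$ is strictly concave; hence it has at most one stationary point, and that point is the unique global maximizer. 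Solving $\psi'(\rho)=0$: since $\eta\in(0,1)$ we have $\ln\eta<0$, so $-\ln\eta>0$ and the equation $\frac{1-\eta}{\eta+(1-\eta)\rho}=-\ln\eta$ yields $\eta+(1-\eta)\rho=-\frac{1-\eta}{\ln\eta}$, i.e. $\rho^*=-\frac{1}{\ln\eta}-\frac{\eta}{1-\eta}$, which is \eqref{rho_star}. By the first step the same $\rho^*$ maximizes $\chi_\mathrm{H}$, and $L_t^*=N_t\rho^*$ by \eqref{rho}.

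There is no deep obstacle here; the only point needing care is feasibility. Since $\rho$ ranges over $\left[\frac{M_t}{N_t},1\right]$, the maximizer of $\bar{R}_S^\mathrm{L}$ (resp. $\bar{R}_S^\mathrm{H}$) is strictly the projection of $\rho^*$ onto that interval, so I would add a short remark that one should check $\rho^*\in(0,1)$ for $\eta\in(0,1)$ — it tends to $\tfrac12$ as $\eta\to1$ and to $0^+$ as $\eta\to0^+$, so the unconstrained optimum is typically interior — and that when $\rho^*$ drops below $\frac{M_t}{N_t}$ the optimum is attained at the boundary $\rho=\frac{M_t}{N_t}$. The elementary inequalities $0<-\frac{1}{\ln\eta}-\frac{\eta}{1-\eta}<1$ follow from standard bounds on $\ln$ (equivalently convexity of $e^x$) and I would not carry out that routine check in detail.
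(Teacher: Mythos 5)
Your proof is correct, and it reaches the stationary-point equation the same way the paper does (setting the $\rho$-derivative to zero and reducing $\chi_\mathrm{H}$ to $\chi_\mathrm{L}$ via the large-$N_t$ identity $\chi_\mathrm{H}=\frac{1}{\eta}\chi_\mathrm{L}$), but your verification that the stationary point is the global maximizer is genuinely cleaner. The paper works with $\chi_\mathrm{L}$ directly, whose second derivative $\eta^{\rho-1}\ln\eta\bigl(2-2\eta+\ln\eta\left[\eta+(1-\eta)\rho\right]\bigr)$ is not sign-definite for all $\eta$, forcing a case split at the numerical threshold $\eta\approx 0.2032$: for $\eta\in[0.2032,1)$ it argues concavity of $\chi_\mathrm{L}$ itself, and for $\eta\in(0,0.2032)$ it instead tracks the sign change of the first derivative across $\rho^*$. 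Your observation that $\ln\chi_\mathrm{L}$ is strictly concave (with $\psi''=-\frac{(1-\eta)^2}{\left[\eta+(1-\eta)\rho\right]^2}<0$ uniformly in $\eta\in(0,1)$) collapses both cases into one line and eliminates the ad hoc constant entirely; since $\ln$ is increasing and $\chi_\mathrm{L}>0$, log-concavity suffices. Your added remark on feasibility — that the true optimizer is the projection of $\rho^*$ onto $\left[\frac{M_t}{N_t},1\right]$, with the limits $\rho^*\to\frac{1}{2}$ as $\eta\to 1$ and $\rho^*\to 0^+$ as $\eta\to 0^+$ — is a point the paper glosses over and is worth keeping. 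The only caveat is that your treatment of $\chi_\mathrm{H}$ inherits the same approximation $\frac{M_r}{N_t(1-\rho)}\to 0$ that the paper uses when it passes from the exact derivative of $\chi_\mathrm{H}$ to $\frac{1}{\eta}\frac{\partial\chi_\mathrm{L}}{\partial\rho}$, so neither argument is more rigorous on that front; both deliver $\rho^*$ for $\chi_\mathrm{H}$ only in the large-$N_t$ limit.
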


\begin{proof}
See Appendix~\ref{proof_lemma_rho_star}.
\end{proof}

Note that the optimal $\rho^*$ in \eqref{rho_star} depends only on $\eta$. In order to characterize the effect of $\eta$ on $\rho^*$, we derive that
\begin{align}
\frac{\partial \rho^* }{\partial \eta}=\frac{\left(\eta^{-\frac{1}{2}}-\eta^{\frac{1}{2}}+\ln \eta\right)\left(\eta^{-\frac{1}{2}}-\eta^{\frac{1}{2}}-\ln \eta\right)}{\left[(1-\eta)\ln \eta\right]^2}>0,
\label{rho_eta}
\end{align}
where we use the fact that $\eta^{-\frac{1}{2}}-\eta^{\frac{1}{2}}-\ln \eta>\eta^{-\frac{1}{2}}-\eta^{\frac{1}{2}}+\ln \eta>0$ for $\eta\in(0,1)$.
This implies that $\rho^*$ increases with $\eta$. 
As the channel gain of the nondominant beams increases in the power domain, the optimal number of nondominant beams correspondingly decreases in the angle domain.
In this way, any of the nondominant beams can produce nearly the same effect on the secure communication.

\subsubsection{Effect of $\eta$}
At Bob, the AN experiences a degraded channel gain of $\eta$ over the nondominant beams while for Eve, both confidential information and AN is transmitted over nondominant beams.
In order to analyze the effect of $\eta$ on the secrecy rate, i.e., on $\chi_\mathrm{L}\left(\chi_\mathrm{H}\right)$, we focus on the derivatives of $\chi_\mathrm{L}$ and $\chi_\mathrm{H}$ w.r.t. $\eta$ as follows
\begin{align}
\frac{\partial \chi_\mathrm{L} }{\partial \eta}=-(1-\rho)\rho(1-\eta)\eta^{\rho-2}<0,~~~~~~~~~~~~~~~~~~~~~~~~~~~~~
\label{g_eta}
\end{align}
\begin{align}
\frac{\partial \chi_\mathrm{H} }{\partial \eta}=\left[\rho(\rho-2)-\eta(\rho-1)^2\right]\eta^{\rho-3}\left(\!1\!-\!\frac{M_r}{N_t(1-\rho)}\!\right)^{-1}<0,
\end{align}
where we apply \eqref{g1} and \eqref{g2} and consider $\rho<1-\frac{M_r}{N_t}$ for common sparse scenarios.
This implies that both $\chi_\mathrm{L}$ and $\chi_\mathrm{H}$ decrease slightly with $\eta$.
Given a larger $\eta$, it is easier for Eve to wiretap the confidential information even though it is transmitted over nondominant beams, leading to a lower secrecy rate.



\section{Numerical Results}

In this section, we verify the derived results including Theorems~\ref{theorem_Rs}-\ref{theorem_Rs_bound}, Propositions~\ref{proposition_Cu_L}-\ref{proposition_Ce}, and Corollary~\ref{corollary_R_gap} under various system parameters.
In the numerical simulations, we set $N_r=N_e=N_t$ and $L_r=L_e=L_t$, and denote $\gamma_0\triangleq\frac{P}{\sigma_n^2}$ as the system SNR.

\begin{figure}[tb]
\centering\includegraphics[width=0.55\textwidth]{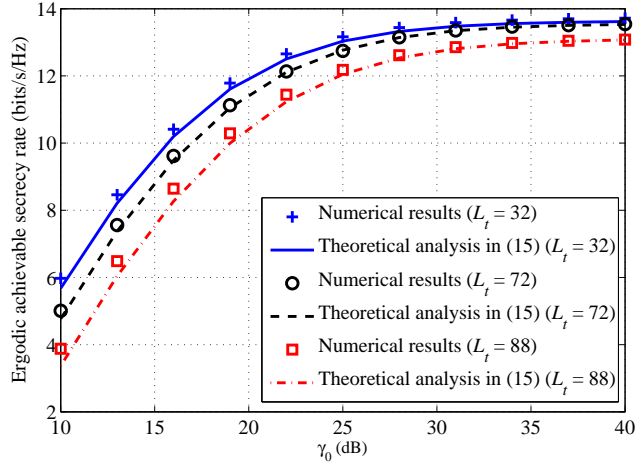}
\caption{Ergodic achievable secrecy rate versus SNR with various $L_t$ ($N_t=128$, $M_t=4$, $M_r=M_e=16$, $\phi=0.6$, and $\eta=0.1$).}
\label{Rs_SNR__Lt}
\end{figure}

\begin{figure}[tb]
\centering\includegraphics[width=0.55\textwidth]{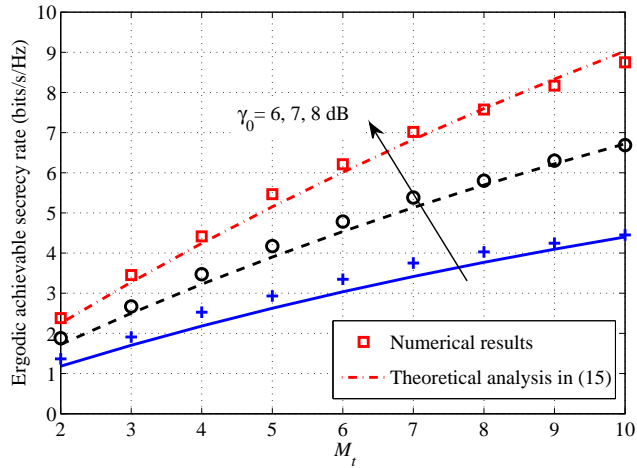}
\caption{Ergodic achievable secrecy rate versus $M_t$ with various SNRs ($N_t=256$, $L_t=28$, $M_r=M_e=20$, $\phi=0.6$, and $\eta=0.1$).}
\label{Rs_SNR__Mt}
\end{figure}

\begin{figure}[tb]
\centering\includegraphics[width=0.55\textwidth]{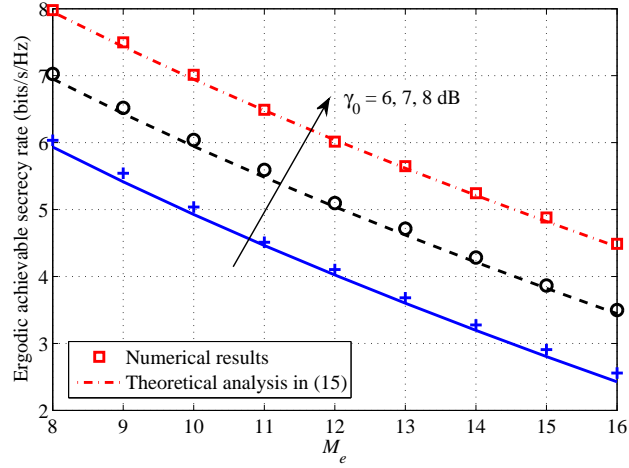}
\caption{Ergodic achievable secrecy rate versus $M_e$ with various SNRs ($N_t=256$, $L_t=28$, $M_t=4$, $M_r=16$, $\phi=0.6$, and $\eta=0.1$).}
\label{Rs_SNR__Me}
\end{figure}

\begin{figure}[tb]
\centering\includegraphics[width=0.55\textwidth]{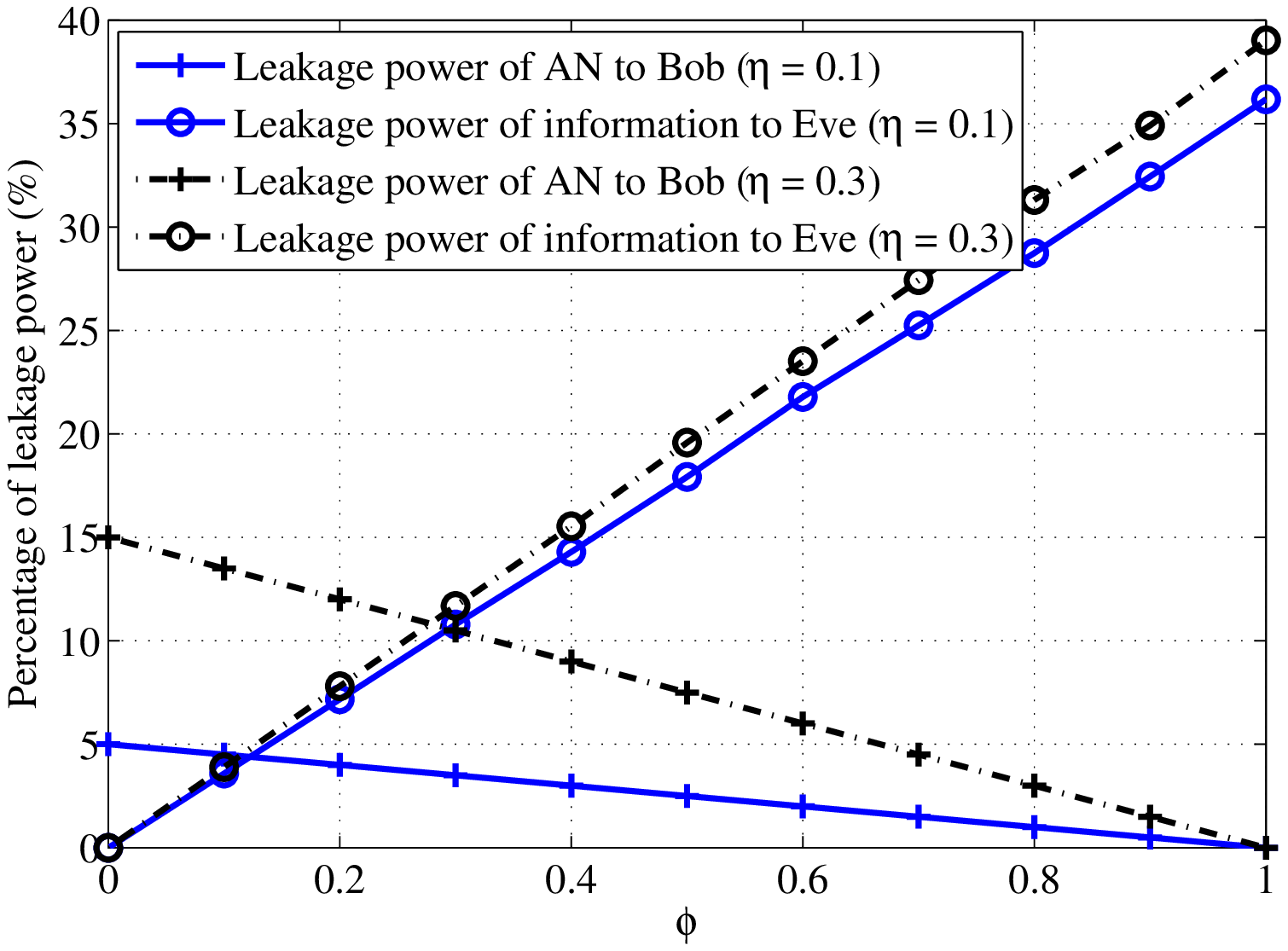}
\caption{Percentage of leakage power to transmit power $P$ versus $\phi$ ($N_t=128$, $L_t=L_r=L_e=88$, $M_t=4$, and $M_r=M_e=16$).}
\label{Powerleakage}
\end{figure}

Fig. \ref{Rs_SNR__Lt} verifies the accuracy of the derived rate expression for different values of $L_t$ versus SNR.
Dotted markers correspond to numerical results while solid lines correspond to the derived theoretical expression in \eqref{Rs}.
We observe that the ergodic achievable secrecy rate increases with $\gamma_0$ and, as expected, finally becomes saturated at high SNR due to the effect of AN.
Also, the figure shows that a smaller $L_t$ achieves higher secrecy rate, which coincides with the theoretical observation in Section~\uppercase\expandafter{\romannumeral4}.C that slight sparsity in the angle domain improves the secrecy rate.
Fig. \ref{Rs_SNR__Mt} shows the effect of $M_t$ on the ergodic achievable secrecy rate given the sparsity parameters $\rho=\frac{L_t}{N_t}=\frac{28}{256}$ and $\eta=0.1$. We observe that the secrecy rate monotonically increases with $M_t$ for the considered SNRs, $\gamma_0=6, 7$, and $8$ dB.
This is because deploying more RF chains can achieve higher beamforming gain in massive MIMO systems, although the large number of RF chains can significantly increase the circuit power consumption.
The effect of increasing the number of RF chains at Eve on the secrecy rate is shown in Fig. \ref{Rs_SNR__Me}. It is observed that the secrecy rate monotonically decreases with an increasing $M_e$ for the considered SNRs, $\gamma_0=6, 7$, and $8$ dB.
This is because more degrees of freedom are introduced by a large number of RF chains at Eve which facilitates a more efficient eavesdropping.

From Figs.~\ref{Rs_SNR__Lt}-\ref{Rs_SNR__Me}, it can be observed that the numerical results of the ergodic achievable secrecy rate is slightly higher than our theoretical analysis in Theorem~\ref{theorem_Rs}, especially when the condition $M_t\ll M_r$ holds.
The expression in Theorem~\ref{theorem_Rs} is therefore verified to be a lower bound on the ergodic achievable secrecy rate, which certainly serves as a lower bound on the ergodic secrecy capacity.


Fig. \ref{Powerleakage} shows the percentage of leakage power to total transmit power $P$, including the leakage power of AN to Bob and the leakage power of information to Eve.
Obviously, the AN leakage power decreases proportionally with the power allocation parameter $\phi$, while the information leakage power increases with $\phi$.
When the channel gain of the nondominant beams, $\eta$, increases, the AN leakage power increases significantly because the AN is leaked to Bob over these nondominant beams.
On the other hand, the information leakage power only slightly increases with $\eta$ because the information is eavesdropped by Eve over both dominant and nondominant beams.

\begin{figure}[tb]
\centering\includegraphics[width=0.55\textwidth]{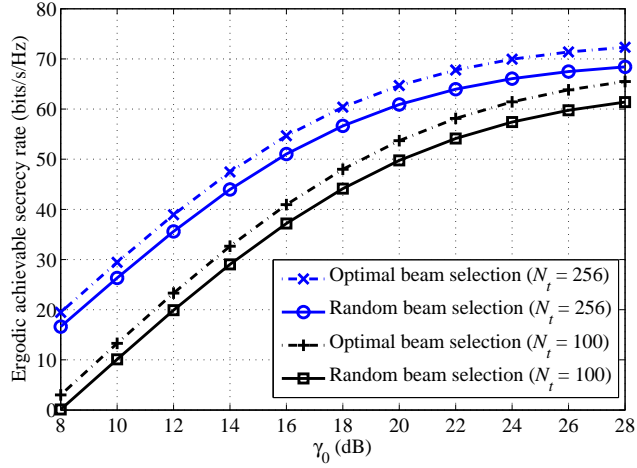}
\caption{Performance comparison between random and optimal beam selections with statistical and instantaneous CSI ($M_t=M_r=M_e=32$, $L_t=40$, $\phi=0.6$, and $\eta=0.1$).}
\label{beamselection2}
\end{figure}

In general, it is difficult to estimate the instantaneous information of the entire channel matrix due to the fact that only a limited number of RF chains are available.
Thus, statistical CSI in terms of the sparsity patterns, i.e., $\mathcal{U}$ and $\mathcal{E}$, is assumed known for beam selection.
With statistical CSI, the transmitter and receiver randomly choose a subset
of beams from the dominant channel directions in $\mathcal{U}$ and $\mathcal{E}$.
If instantaneous CSI were available at the transceiver, the optimal subset of dominant beams could be chosen to maximize the achievable secrecy rate.
Fig. \ref{beamselection2} compares the ergodic achievable secrecy rates using random and optimal beam selections.
For Alice, Bob, and Eve, a subset of 32 beams are chosen from the 40 dominant beams for the 32 RF chains exploited at each terminal.
As expected, we observe that the optimal beam selection only achieves a slightly higher secrecy rate than that of the random selection.
The performance of the optimal beam selection therefore serves as an ideal upper bound for benchmarking as acquiring the entire channel matrix with a limited number of RF chains is challenging if not impossible.

\begin{figure}[tb]
\centering\includegraphics[width=0.55\textwidth]{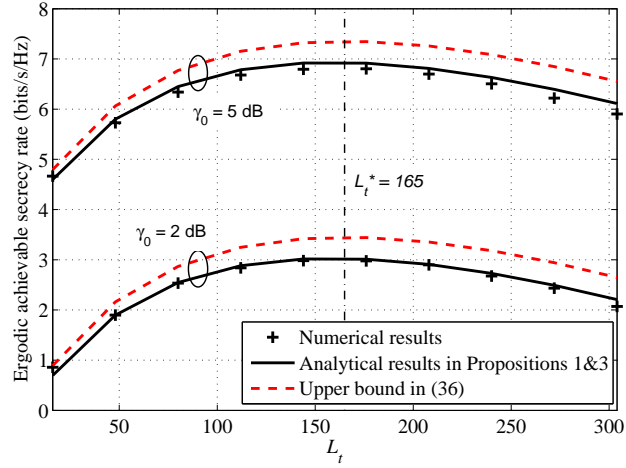}
\caption{Ergodic achievable secrecy rate and the corresponding upper bound at low SNRs ($M_t=4$, $M_r=96$, and $M_e=16$).}
\label{Rs_Lt_low}
\end{figure}

\begin{figure}[tb]
\centering\includegraphics[width=0.55\textwidth]{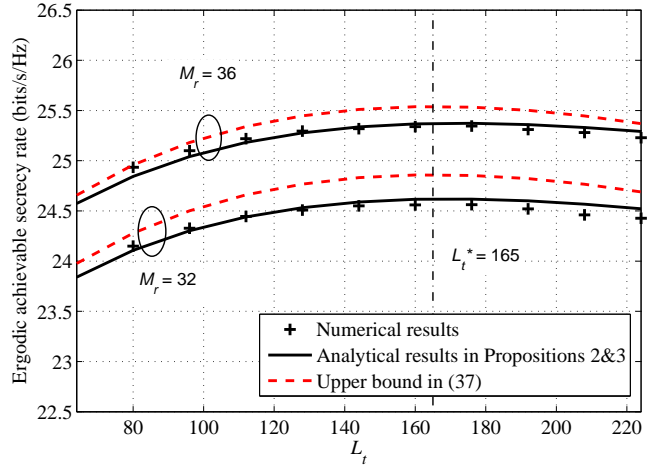}
\caption{Ergodic achievable secrecy rate and the corresponding upper bound at high SNRs ($M_t=4$, $M_e=8$, and $\gamma_0=30$ dB).}
\label{Rs_Lt_high}
\end{figure}

Fig. \ref{Rs_Lt_low} and Fig. \ref{Rs_Lt_high} compare the ergodic achievable secrecy rate and the derived upper bound versus $L_t$ at low and high SNRs, respectively.
We set $N_t=512$, $\phi=0.9$ and $\eta=0.1$.
The analytical results, which are obtained by applying Propositions~\ref{proposition_Cu_L}--\ref{proposition_Ce} to Theorem~\ref{theorem_Rs}, are shown to be accurate compared to the numerical results.
We observe that the secrecy rate first increases and then decreases with $L_t$.
As indicated in Section~\uppercase\expandafter{\romannumeral2}, the secrecy rate bonus arises from the mismatch of the sparsity patterns $\mathcal{U}_t$ and $\mathcal{E}_t$ with size $L_t$.
Given a fixed $N_t$ and as $L_t$ increases, this mismatch, in terms of the number of non-overlapped beams, first increases when $\mathcal{U}_t$ and $\mathcal{E}_t$ contain more dominant beams. Then as $L_t\rightarrow N_t$, however, both sets $\mathcal{U}_t$ and $\mathcal{E}_t$ tend to entirely overlap with each other because both contain almost all the transmit beams, and hence the mismatch vanishes.
In addition, the optimal $L_t^*=165$ obtained using Lemma~\ref{lemma_rho_star} is also verified to be accurate in the simulation results.

\begin{figure}[tb]
\centering\includegraphics[width=0.55\textwidth]{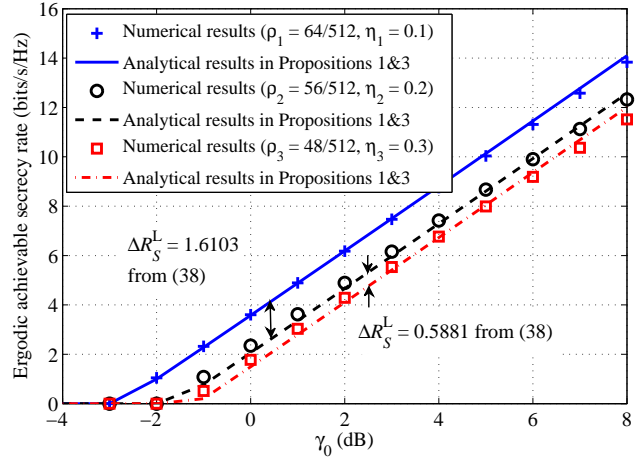}
\caption{Ergodic achievable secrecy rate gap at low SNRs ($N_t=512$, $M_t=4$, $M_r=192$ $M_e=16$, and $\phi=0.9$).}
\label{Fig_Rs_gap_low}
\end{figure}

\begin{figure}[tb]
\centering\includegraphics[width=0.55\textwidth]{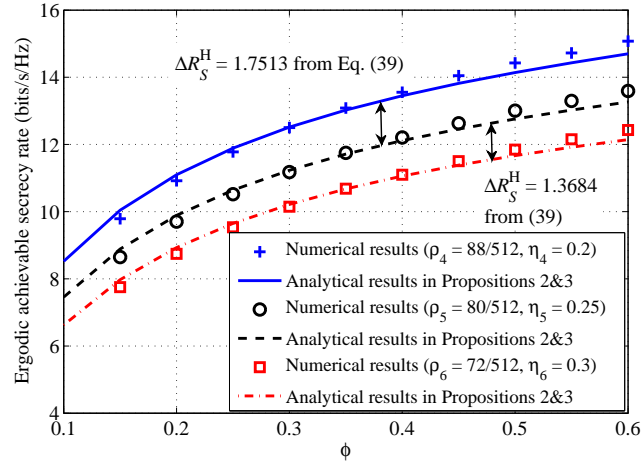}
\caption{Ergodic achievable secrecy rate gap at high SNRs ($N_t=512$, $M_t=4$, $M_r=32$ $M_e=12$, and $\gamma_0=25$ dB).}
\label{Fig_Rs_gap_high}
\end{figure}

Fig. \ref{Fig_Rs_gap_low} evaluates the ergodic achievable secrecy rate of the system under different levels of sparsity at low SNRs.
As indicated in Corollary~\ref{corollary_R_gap}, their difference in terms of secrecy rate depends only on the values of $\rho$ and $\eta$, regardless of SNR.
This is because the mismatch between the sparsity patterns of Bob's and Eve's channels are determined only by the parameters $\rho$ and $\eta$.
Using \eqref{Rs_gap_low}, secrecy rate gaps $\Delta R_S^\mathrm{L}$ of $\{(\rho_1,\eta_1), (\rho_2,\eta_2)\}$ and $\{(\rho_2,\eta_2), (\rho_3,\eta_3)\}$ are calculated respectively as $1.6103$ and $0.5881$ bits/s/Hz.
This calculation matches the numerical results in the figure by comparing with the exact ergodic achievable secrecy rate.
Fig. \ref{Fig_Rs_gap_high} shows the ergodic achievable secrecy rate gaps versus $\phi$ at high SNRs.
Similarly, the secrecy rate gaps $\Delta R_S^\mathrm{H}$ obtained from \eqref{Rs_gap_high} are $1.7513$ and $1.3684$ bits/s/Hz, which perfectly coincide with the considered cases.

\begin{figure*}[tb]
\centering
\subfigure[Contour of the metric $\chi_\mathrm{L}$ in \eqref{g1}.]{
\begin{minipage}[t]{0.48\linewidth}
\centering
\includegraphics[width=1\linewidth]{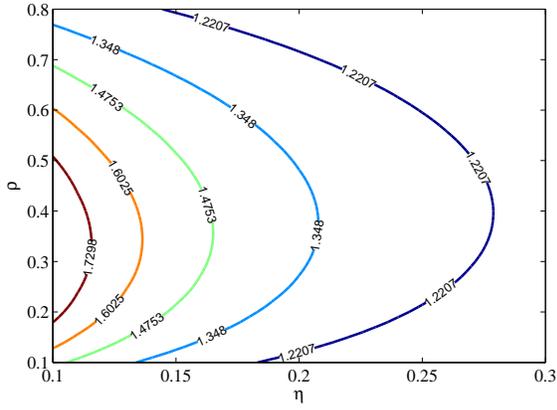}
\label{Contour_gL}
\end{minipage}}
\subfigure[Contour of the metric $\chi_\mathrm{H}$ in \eqref{g2}.]{
\begin{minipage}[t]{0.48\linewidth}
\centering
\includegraphics[width=1\linewidth]{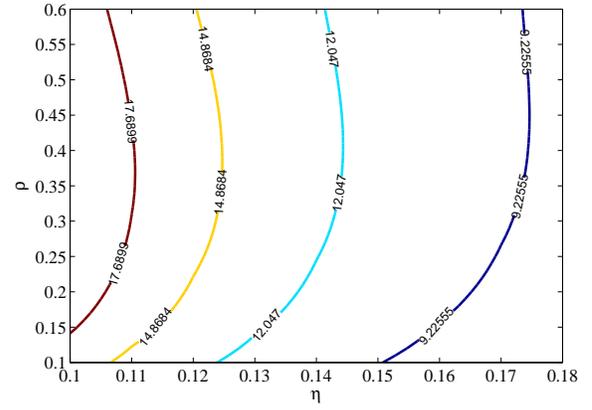}
\label{Contour_gH}
\end{minipage}}
\caption{Contours of the defined metrics measuring the secrecy performance.}
\label{Contour}
\end{figure*}

\begin{figure}[tb]
\centering\includegraphics[width=0.55\textwidth]{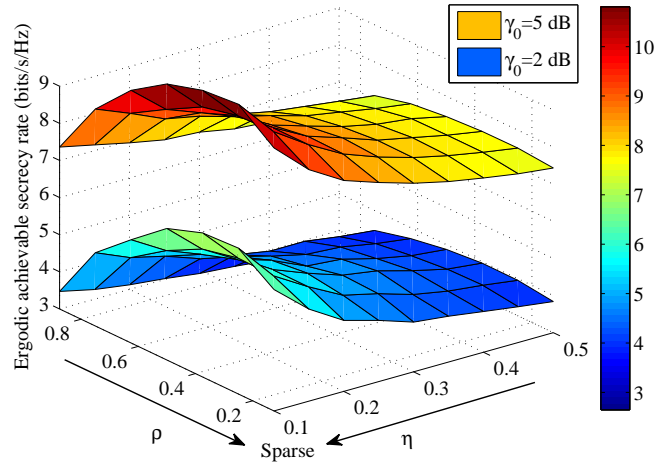}
\caption{Ergodic achievable secrecy rate versus $\rho$ and $\eta$ at low SNRs ($N_t=512$, $M_t=4$, $M_r=192$ $M_e=16$, and $\phi=0.9$).}
\label{Rs_gap_mesh_L}
\end{figure}

Fig. \ref{Contour_gL} and Fig. \ref{Contour_gH} display the contours of the metrics, $\chi_\mathrm{L}$ in \eqref{g1} and $\chi_\mathrm{H}$ in \eqref{g2}.
Larger $\chi_\mathrm{L}\left(\chi_\mathrm{H}\right)$ means better secrecy performance can be achieved.
This figure shows that both $\chi_\mathrm{L}$ and $\chi_\mathrm{H}$ decrease with $\eta$, which conforms our observation that sparsity in the power domain always contributes to secure communication.
Smaller $\eta$ means that the nondominant beams degrade more significantly compared to the dominant ones.
In this way, the mismatch between the sparsity patterns $\mathcal{U}$ and $\mathcal{E}$ becomes more pronounced, leading to higher secrecy rate.
On the other hand, there exists an optimal value, $\rho^*$, which statistically provides the optimal mismatch between the sparsity patters and achieves the largest $\chi_\mathrm{L} \left(\chi_\mathrm{H}\right)$.
The optimal $\rho^*$ increases with $\eta$ as indicated in \eqref{rho_star}.
Similar observations can be obtained from Fig.~\ref{Rs_gap_mesh_L}, which shows the secrecy rate versus $\rho$ and $\eta$ at low SNRs.
In fact, the curves in Fig.~\ref{Contour_gL} show the same performance as the contours of the 3D plot in Fig.~\ref{Rs_gap_mesh_L}, because the secrecy rate monotonically increases with $\chi_\mathrm{L}$. Comparing the two cases with $\gamma_0=2$ and $5$~dB in Fig. \ref{Rs_gap_mesh_L}, we observe that the rate bonus due to the channel sparsity is determined only by the parameters $\rho$ and $\eta$, regardless of SNR as indicated in Corollary~\ref{corollary_R_gap}.

\section{Conclusions}

In this paper, we proposed a secure communication scheme leveraging the spatial sparsity of the mm-Wave massive MIMO channel.
Confidential signals are transmitted over the dominant beams of Bob's channel while AN is injected onto the nondominant beams.
We derived two upper bounds for the ergodic achievable secrecy rate and quantitatively characterized the contribution of the sparsity as an additive secrecy rate bonus.
The rate bonus arises from uncertainty in the sparsity pattern of Bob's channel, which is unknown to Eve.
We defined two metrics for the statistics of the channel sparsity and analyzed the impact of sparsity in the angle and power domains on the secrecy rate.
In the angle domain, we derived the optimal level of sparsity that achieves the highest secrecy rate.
On the other hand, we showed that the sparsity in the power domain always benefits secure communication.
Interesting future works include further extending our current results to a general scenario with imperfect CSI at receivers for signal detection.


\begin{appendices}
\section{Preliminary Lemmas}
\label{Appendix_lemmas}

The following two lemmas and their proofs will be used in Appendix~\ref{proof_theorem_Rs} to prove Theorem~\ref{theorem_Rs}.

\begin{lemma}
\label{lemma_preliminary_1}
For $\hat{\mathbf{G}}=\left[ \mathbf{G}(i,j)\right]_{i\in \bar{\mathcal{U}}_r,j\notin \mathcal{U}_t}$ in \eqref{yv_til} whose columns follow $\mathcal{CN}(\mathbf{0},\eta\mathbf{I}_{M_r})$, we have
\begin{align}
\left[\! \frac{(\!1\!-\!\phi\!)P}{(\!N_t\!-\!L_t\!)\sigma_n^2} \hat{\mathbf{G}}\hat{\mathbf{G}}^H \!+\! \mathbf{I}_{M_r}  \!\right]^{\!-\!1}
\xrightarrow{\mathrm{a.s.}} \left(1-\frac{\mathcal{F}(\alpha,\beta)}{4\alpha\beta}\right)\mathbf{I}_{M_r},
\label{equ_lemma1}
\end{align}
where $\alpha$, $\beta$, and $\mathcal{F}(\cdot,\cdot)$ are respectively defined in \eqref{alpha}, \eqref{beta}, and \eqref{func_F} in Theorem~\ref{theorem_Rs}.
\end{lemma}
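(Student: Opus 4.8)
\textbf{Proof proposal for Lemma~\ref{lemma_preliminary_1}.}

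The plan is to recognize the left-hand side as a resolvent-type quantity for a scaled Wishart matrix and to invoke classical results from random matrix theory on the limiting spectral distribution. First I would rewrite the matrix $\frac{(1-\phi)P}{(N_t-L_t)\sigma_n^2}\hat{\mathbf{G}}\hat{\mathbf{G}}^H$ in a normalized form. Since the columns of $\hat{\mathbf{G}}\in\mathbb{C}^{M_r\times(N_t-L_t)}$ are i.i.d.\ $\mathcal{CN}(\mathbf{0},\eta\mathbf{I}_{M_r})$, write $\hat{\mathbf{G}}=\sqrt{\eta}\,\mathbf{Z}$ where $\mathbf{Z}$ has i.i.d.\ $\mathcal{CN}(0,1)$ entries. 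Then the matrix of interest becomes $\frac{(1-\phi)\eta P}{\sigma_n^2}\cdot\frac{1}{N_t-L_t}\mathbf{Z}\mathbf{Z}^H=\alpha\cdot\frac{1}{N_t-L_t}\mathbf{Z}\mathbf{Z}^H$, using the definition of $\alpha$ in \eqref{alpha}. As $N_t\to\infty$ with $M_r/(N_t-L_t)\to\beta$ (the definition in \eqref{beta}), the empirical spectral distribution of $\frac{1}{N_t-L_t}\mathbf{Z}\mathbf{Z}^H$ converges to the Mar\v{c}enko--Pastur law with ratio $\beta$.

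The key step is then to observe that $\left[\alpha\,\mathbf{S}+\mathbf{I}_{M_r}\right]^{-1}$, where $\mathbf{S}=\frac{1}{N_t-L_t}\mathbf{Z}\mathbf{Z}^H$, has eigenvalues $\frac{1}{1+\alpha\lambda_k}$ where $\lambda_k$ are the eigenvalues of $\mathbf{S}$. Because the left side of \eqref{equ_lemma1} should converge to a scalar multiple of the identity, I would argue that the relevant object is the normalized trace $\frac{1}{M_r}\mathrm{Tr}\left[\alpha\mathbf{S}+\mathbf{I}_{M_r}\right]^{-1}$, which by the Mar\v{c}enko--Pastur theorem converges almost surely to $\int\frac{1}{1+\alpha\lambda}\,dF_\beta(\lambda)$, a known integral against the Mar\v{c}enko--Pastur density. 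Evaluating this Stieltjes-transform-type integral at the point $-1/\alpha$ yields a closed form; matching it to $1-\frac{\mathcal{F}(\alpha,\beta)}{4\alpha\beta}$ with $\mathcal{F}$ as in \eqref{func_F} is a direct, if slightly tedious, computation using the standard formula for the Stieltjes transform of the Mar\v{c}enko--Pastur law (which involves exactly the square-root terms $\sqrt{\alpha(1\pm\sqrt{\beta})^2+1}$ appearing in $\mathcal{F}$). To upgrade from convergence of the normalized trace to convergence of the full matrix to a multiple of the identity, I would use the fact that $\hat{\mathbf{G}}\hat{\mathbf{G}}^H$ is rotationally invariant in distribution (its law is invariant under $\mathbf{U}\hat{\mathbf{G}}$ for unitary $\mathbf{U}$), so any limiting deterministic equivalent must commute with all unitaries and hence be a scalar matrix; combined with the trace convergence this pins the scalar down.

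The main obstacle I expect is the bookkeeping in the second half: verifying that the closed-form evaluation of $\int\frac{dF_\beta(\lambda)}{1+\alpha\lambda}$ really equals $1-\frac{\mathcal{F}(\alpha,\beta)}{4\alpha\beta}$ with the precise constants. This requires care with the branch of the square root in the Stieltjes transform and with the algebra of combining the two radicals $\sqrt{\alpha(1+\sqrt{\beta})^2+1}$ and $\sqrt{\alpha(1-\sqrt{\beta})^2+1}$ into the squared-difference form of $\mathcal{F}$. A clean way to handle this is to first derive the identity $\frac{1}{M_r}\mathrm{Tr}\left[\alpha\mathbf{S}+\mathbf{I}\right]^{-1}\xrightarrow{\mathrm{a.s.}} m$, where $m$ solves the fixed-point equation characterizing the Mar\v{c}enko--Pastur Stieltjes transform, then solve that quadratic for $m$ explicitly and simplify. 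One should also note that the regime $M_r\le N_t-L_t$ (equivalently $\beta\le 1$) is implicitly assumed, which keeps $\mathbf{S}$ full-rank-typical and avoids an atom at zero complicating the integral; if $\beta$ can exceed $1$ the formula still holds but the derivation picks up the mass at the origin, which I would handle as a remark.
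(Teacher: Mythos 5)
Your proposal is correct and follows essentially the same route as the paper: establish that the limit is a scalar multiple of $\mathbf{I}_{M_r}$, then identify the scalar as the normalized trace of the resolvent, which is the $\eta$-transform of the Mar\v{c}enko--Pastur law and equals $1-\frac{\mathcal{F}(\alpha,\beta)}{4\alpha\beta}$ (the paper simply cites this as Eq.~(1.16) of Tulino and Verd\'u rather than re-deriving the Stieltjes-transform integral, and justifies the diagonal structure via an entrywise law of large numbers plus the Continuous Mapping Theorem instead of your unitary-invariance argument). The two presentations differ only in these minor justificatory details, not in substance.
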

\begin{proof}
Applying the Central Limit Theorem, we have $\frac{1}{(N_t-L_t)\eta} \hat{\mathbf{G}}\hat{\mathbf{G}}^H  \xrightarrow{\mathrm{a.s.}}\mathbf{I}_{M_r} $ for large $N_t$.
This convergence is preserved for continuous matrix functions according to the Continuous Mapping Theorem \cite{theorem}, yielding that $\left(\!\frac{(\!1\!-\!\phi\!)P}{(\!N_t\!-\!L_t\!)\sigma_n^2} \hat{\mathbf{G}}\hat{\mathbf{G}}^H \!+\! \mathbf{I}_{M_r}\!\right)^{\!-\!1}$ almost surely converges to a diagonal matrix with equal diagonal entries. Combining with the result in \cite[Eq. (1.16)]{matrix},
\begin{align}
\frac{1}{M_r}\mathrm{Tr} \left(\left(\!\frac{(\!1\!-\!\phi\!)P}{(\!N_t\!-\!L_t\!)\sigma_n^2} \hat{\mathbf{G}}\hat{\mathbf{G}}^H \!+\! \mathbf{I}_{M_r}\!\right)^{\!-\!1}\right)
\xrightarrow{\mathrm{a.s.}} 1-\frac{\mathcal{F}(\alpha,\beta)}{4\alpha\beta},
\end{align}
we arrive at \eqref{equ_lemma1}.
\end{proof}

\begin{lemma}
\label{lemma_preliminary_2}
For $\hat{\mathbf{H}}=\left[ \mathbf{H}(i,j)\right]_{i\in \bar{\mathcal{E}}_r,j\notin\mathcal{U}_t}$ in \eqref{yev_til}, we have
\begin{align}
\left(\hat{\mathbf{H}}\hat{\mathbf{H}}^H\right)^{-1}
\xrightarrow{\mathrm{a.s.}}\frac{1}{(a-M_e)b}\mathbf{I}_{M_e},
\label{H2}
\end{align}
where $a$ and $b$ are respectively defined in \eqref{a} and \eqref{b} in Theorem~\ref{theorem_Rs}.
\end{lemma}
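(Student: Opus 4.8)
The plan is to follow the same two-step template used in the proof of Lemma~\ref{lemma_preliminary_1}: first show that $(\hat{\mathbf{H}}\hat{\mathbf{H}}^H)^{-1}$ converges almost surely to a scalar multiple of $\mathbf{I}_{M_e}$, and then identify the scalar through a normalized trace. The new feature here is that, unlike the columns of $\hat{\mathbf{G}}$, the columns of $\hat{\mathbf{H}}$ do not share a common variance: a column with $j\notin\mathcal{U}_t$ has variance $1$ when $j\in\mathcal{E}_t$ and variance $\eta$ when $j\notin\mathcal{E}_t$, so $\hat{\mathbf{H}}\hat{\mathbf{H}}^H$ is a superposition of two independent scaled complex Wishart matrices, and $a$, $b$ will turn out to be, respectively, the effective column count and effective column variance of this superposition.

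First I would write $\hat{\mathbf{H}}=\hat{\mathbf{Z}}\mathbf{D}^{1/2}$, where $\hat{\mathbf{Z}}\in\mathbb{C}^{M_e\times(N_t-L_t)}$ has i.i.d.\ $\mathcal{CN}(0,1)$ entries and $\mathbf{D}=\mathrm{diag}(d_1,\dots,d_{N_t-L_t})$ collects the per-column variances $d_k\in\{1,\eta\}$, so that $\hat{\mathbf{H}}\hat{\mathbf{H}}^H=\hat{\mathbf{Z}}\mathbf{D}\hat{\mathbf{Z}}^H$. Because $\mathcal{U}_t$ and $\mathcal{E}_t$ are statistically independent and both of cardinality $L_t$, among the $N_t-L_t$ indices with $j\notin\mathcal{U}_t$ approximately $L_t(N_t-L_t)/N_t$ also satisfy $j\in\mathcal{E}_t$ (variance $1$) while the remaining $(N_t-L_t)^2/N_t$ satisfy $j\notin\mathcal{E}_t$ (variance $\eta$), so for large $N_t$
\begin{align}
\mathrm{Tr}(\mathbf{D})\to\frac{(N_t-L_t)\left[L_t+\eta(N_t-L_t)\right]}{N_t},\qquad
\mathrm{Tr}(\mathbf{D}^2)\to\frac{(N_t-L_t)\left[L_t+\eta^2(N_t-L_t)\right]}{N_t},
\label{planTrD}
\end{align}
and comparing \eqref{planTrD} with \eqref{a}--\eqref{b} gives the identifications $a=\mathrm{Tr}(\mathbf{D})^2/\mathrm{Tr}(\mathbf{D}^2)$ and $b=\mathrm{Tr}(\mathbf{D}^2)/\mathrm{Tr}(\mathbf{D})$.

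To obtain the scalar-matrix limit I would note that $\mathbb{E}[\hat{\mathbf{Z}}\mathbf{D}\hat{\mathbf{Z}}^H]=\mathrm{Tr}(\mathbf{D})\mathbf{I}_{M_e}$ and that the fluctuation $\mathbf{E}\triangleq\hat{\mathbf{Z}}\mathbf{D}\hat{\mathbf{Z}}^H-\mathrm{Tr}(\mathbf{D})\mathbf{I}_{M_e}$ satisfies $\mathbb{E}[\mathbf{E}]=\mathbf{0}$ and $\mathbb{E}[\|\mathbf{E}\|_F^2]=M_e^2\,\mathrm{Tr}(\mathbf{D}^2)=O(N_t)$, which is negligible against $\mathrm{Tr}(\mathbf{D})^2=O(N_t^2)$; hence $\mathrm{Tr}(\mathbf{D})^{-1}\hat{\mathbf{H}}\hat{\mathbf{H}}^H\xrightarrow{\mathrm{a.s.}}\mathbf{I}_{M_e}$ by the Central Limit Theorem exactly as in Lemma~\ref{lemma_preliminary_1}, and the Continuous Mapping Theorem then forces $(\hat{\mathbf{H}}\hat{\mathbf{H}}^H)^{-1}$ to converge almost surely to a diagonal matrix with equal diagonal entries, say $c\,\mathbf{I}_{M_e}$. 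To pin down $c=\lim\frac{1}{M_e}\mathbb{E}\big[\mathrm{Tr}\big((\hat{\mathbf{H}}\hat{\mathbf{H}}^H)^{-1}\big)\big]$, expand $(\hat{\mathbf{H}}\hat{\mathbf{H}}^H)^{-1}=\mathrm{Tr}(\mathbf{D})^{-1}\sum_{m\ge0}(-1)^m\big(\mathbf{E}/\mathrm{Tr}(\mathbf{D})\big)^m$ and take the expected trace: the $m=1$ term vanishes, the $m=2$ term equals $M_e^2\,\mathrm{Tr}(\mathbf{D}^2)/\mathrm{Tr}(\mathbf{D})^3$ because $\mathbb{E}[\mathrm{Tr}(\mathbf{E}^2)]=M_e^2\,\mathrm{Tr}(\mathbf{D}^2)$, and the $m\ge3$ terms are of strictly lower order in $N_t$, so that $\frac{1}{M_e}\mathbb{E}\big[\mathrm{Tr}\big((\hat{\mathbf{H}}\hat{\mathbf{H}}^H)^{-1}\big)\big]=\mathrm{Tr}(\mathbf{D})^{-1}\big(1+M_e\,\mathrm{Tr}(\mathbf{D}^2)/\mathrm{Tr}(\mathbf{D})^2+\cdots\big)$, which re-sums to
\begin{align}
c=\frac{1}{\mathrm{Tr}(\mathbf{D})-M_e\,\mathrm{Tr}(\mathbf{D}^2)/\mathrm{Tr}(\mathbf{D})}=\frac{1}{ab-M_eb}=\frac{1}{(a-M_e)b}.
\label{plancval}
\end{align}
Substituting \eqref{plancval} into the a.s.\ limit above gives \eqref{H2}; as a sanity check, \eqref{plancval} is exactly $\frac{1}{b}$ times the classical inverse-Wishart mean $\frac{1}{a-M_e}\mathbf{I}_{M_e}$ one would obtain by modelling $\hat{\mathbf{H}}\hat{\mathbf{H}}^H$ as a complex Wishart matrix with $a$ degrees of freedom and scale $b$.

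The step I expect to be the main obstacle is making the passage from $\mathrm{Tr}(\mathbf{D})$ to $(a-M_e)b$ precise, i.e.\ controlling the Neumann expansion leading to \eqref{plancval}. Because $\mathbf{D}$ takes two distinct values, $\hat{\mathbf{H}}\hat{\mathbf{H}}^H$ is not a genuine Wishart matrix and the effective count $a$ is in general non-integer, so one cannot simply invoke the closed-form inverse-Wishart expectation $\mathbb{E}[\mathbf{W}^{-1}]=(N-M)^{-1}\mathbf{\Sigma}^{-1}$; instead one must verify that the higher moments $\mathbb{E}[\mathrm{Tr}(\mathbf{E}^m)]$, $m\ge3$, affect $c$ only at an order in $N_t$ that vanishes in the limit, so that the two-level mixture can be replaced by a single ``effective'' Wishart. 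This is also the only place where the standing assumption $N_t\gg M_e$ is actually used.
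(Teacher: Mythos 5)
Your proposal is correct and follows essentially the same route as the paper: you split the columns of $\hat{\mathbf{H}}$ into the variance-$1$ population ($j\in\mathcal{E}_t$) and the variance-$\eta$ population ($j\notin\mathcal{E}_t$) with asymptotic sizes $L_t(N_t-L_t)/N_t$ and $(N_t-L_t)^2/N_t$, and your identifications $ab=\mathrm{Tr}(\mathbf{D})$ and $ab^2=\mathrm{Tr}(\mathbf{D}^2)$ are exactly the paper's first- and second-moment matching conditions that define the effective Wishart $\mathcal{W}_{M_e}(a,b\mathbf{I}_{M_e})$. The only difference is cosmetic: where the paper cites a known large-$a$ inverse-Wishart property to obtain $\frac{1}{(a-M_e)b}\mathbf{I}_{M_e}$, you re-derive that constant via a Neumann/trace expansion and geometric resummation, which lands on the same formula at the same level of rigor.
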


\begin{proof}
Denoting $\mathbf{h}_j$ as the $j$th column of submatrix $[\mathbf{H}(i,j)]_{i\in \bar{\mathcal{E}}_r,\forall j}$, we can decompose the matrix $\hat{\mathbf{H}}\hat{\mathbf{H}}^H$ as
\begin{align}
\hat{\mathbf{H}}\hat{\mathbf{H}}^H&=\sum\limits_{j\notin\mathcal{U}_t, j\in\mathcal{E}_t} \mathbf{h}_j\mathbf{h}_j^H + \sum\limits_{j\notin\mathcal{U}_t, j\notin\mathcal{E}_t} \mathbf{h}_j\mathbf{h}_j^H
=\mathbf{H}_1 \mathbf{H}_1^H+\mathbf{H}_2 \mathbf{H}_2^H,
\label{H2_2}
\end{align}
where $\mathbf{H}_1\triangleq\left[ \mathbf{H}(i,j)\right]_{i\in \bar{\mathcal{E}}_r, j\notin\mathcal{U}_t, j\in\mathcal{E}_t}$,
and $\mathbf{H}_2\triangleq\left[ \mathbf{H}(i,j)\right]_{i\in \bar{\mathcal{E}}_r, j\notin\mathcal{U}_t, j\notin\mathcal{E}_t}$.
Under the assumption of large $N_t$ and $L_t$, there are asymptotically $\frac{L_t(\!N_t\!-\!L_t\!)}{N_t}$ columns in $\mathbf{H}_1$ and $\frac{(\!N_t\!-\!L_t\!)^2}{N_t}$ columns in $\mathbf{H}_2$.
Then, we have $\mathbf{H}_1 \mathbf{H}_1^H \!\sim\! \mathcal{W}_{M_e}\!\!\left(\!\frac{L_t(\!N_t\!-\!L_t\!)}{N_t},\mathbf{I}_{M_e}\!\right)$ and $\mathbf{H}_2 \mathbf{H}_2^H \sim \mathcal{W}_{M_e}\left(\frac{(N_t-L_t)^2}{N_t},\eta\mathbf{I}_{M_e}\right)$, where $\mathcal{W}_m(n,\mathbf{\Sigma})$ denotes an $m\times m$ Wishart matrix with $n$ degrees of freedom and $\mathbf{\Sigma}$ is the covariance matrix of each column.
Strictly speaking, the distribution of $\hat{\mathbf{H}}\hat{\mathbf{H}}^H$ is
complicated and intractable.
However, by applying the result in \cite{Secure_Zhu_1}, \cite{Nydick2012wishart} and according to the decomposition in \eqref{H2_2}, it follows that $\hat{\mathbf{H}}\hat{\mathbf{H}}^H$ can be accurately approximated as a Wishart matrix, i.e., $\hat{\mathbf{H}}\hat{\mathbf{H}}^H\sim \mathcal{W}_{M_e}(a,b\mathbf{I}_{M_e})$, where parameters $a$ and $b$ are chosen such that the first two moments of both sides in \eqref{H2_2} are identical, leading to
\begin{align}
ab=\frac{L_t(N_t-L_t)}{N_t}+\eta\frac{(N_t-L_t)^2}{N_t},
\label{ab1}
\end{align}
and
\begin{align}
ab^2= \frac{L_t(N_t-L_t)}{N_t}+ \eta^2\frac{(N_t-L_t)^2}{N_t}.
\label{ab2}
\end{align}
Solving \eqref{ab1} and \eqref{ab2}, we obtain $a$ and $b$ respectively in \eqref{a} and \eqref{b}, and get the derived result in \eqref{H2} by applying the propery in  \cite[Eq. (29)]{DAC1} of the Wishart matrix $\hat{\mathbf{H}}\hat{\mathbf{H}}^H\sim \mathcal{W}_{M_e}(a,b\mathbf{I}_{M_e})$ with large $a$.
\end{proof}

\section{Proof of Theorem~\ref{theorem_Rs}}
\label{proof_theorem_Rs}

We first recall the following lemma from \cite[Lemma 1]{Secure_Zhu_1}.
\begin{lemma}
\label{lemma_sec_rate}
The ergodic achievable secrecy rate of Bob is given by
\begin{align}
\label{secure rate}
R_S=[R_U-C_E]^+,
\end{align}
where $R_U$ is the ergodic achievable rate of Bob and $C_E$ is the ergodic capacity between Alice and Eve seeking to decode the information sent to Bob.
\end{lemma}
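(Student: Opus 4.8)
The statement is literally \cite[Lemma~1]{Secure_Zhu_1}, so the plan is to reconstruct its proof: obtain \eqref{secure rate} as an achievable point of the secrecy-capacity expression \eqref{Cs} by specializing the input distribution and invoking a standard wiretap-coding argument adapted to the ergodic block-fading regime. First I would fix the Gaussian input $\mathbf{s}\sim\mathcal{CN}(\mathbf{0},\mathbf{I}_{N_s})$ together with the precoder $\mathbf{W}=\sqrt{\phi P/M_t}\,\mathbf{I}_{M_t}$ and the AN allocation in \eqref{xv}, so that the Markov chain $\mathbf{s}\to\widetilde{\mathbf{x}}_V\to\widetilde{\mathbf{y}}_V,\widetilde{\mathbf{y}}_{E,V}$ required in \eqref{Cs} holds by construction; this turns \eqref{Cs} into the lower bound $R_S\geq I(\mathbf{s};\widetilde{\mathbf{y}}_V)-I(\mathbf{s};\widetilde{\mathbf{y}}_{E,V})$ evaluated for this distribution.

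Next I would move to the ergodic setting by coding over a large number $N$ of independent coherence blocks and replacing the single-letter mutual informations by their block-averaged versions. For Bob, since the instantaneous CSI of the selected sub-channel $\bar{\mathbf{G}}$ and of the AN-leakage matrix $\hat{\mathbf{G}}$ is available for detection, a Gaussian codebook achieves any rate below the ergodic achievable rate $R_U$ of Bob, whose closed form \eqref{Cu} is then obtained in the body of the proof using the almost-sure convergence of $\hat{\mathbf{G}}\hat{\mathbf{G}}^H$ established in Lemma~\ref{lemma_preliminary_1}. For Eve, to remain conservative I would grant her perfect instantaneous CSI of $\bar{\mathbf{H}},\hat{\mathbf{H}}$ and neglect her thermal noise (as already assumed after \eqref{yev_til}); the information she can extract about the transmitted codeword is then at most the ergodic capacity $C_E$ of the Alice--Eve link, whose closed form \eqref{Ce} follows from the Wishart approximation and large-array limit of Lemma~\ref{lemma_preliminary_2}.

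The secrecy part is the core of the argument. I would use a stochastic encoder that maps each confidential message to one of roughly $2^{NC_E}$ randomly chosen sub-codewords. A standard equivocation computation---Fano's inequality applied to the sub-codeword index at Eve, combined with a soft-covering / channel-resolvability bound---shows that the normalized leakage $\tfrac{1}{N}I(\mathbf{s};\widetilde{\mathbf{y}}_{E,V})\to 0$, so the confidential message is asymptotically concealed whenever the randomization rate is at least $C_E$. Subtracting the sacrificed randomization rate $C_E$ from Bob's reliable rate $R_U$, and recalling that a secrecy rate is non-negative by definition, yields $R_S=[R_U-C_E]^+$, which is \eqref{secure rate}.

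I expect the main obstacle to be the rigor of the secrecy step in the fading-plus-AN setting: one must control Eve's equivocation \emph{uniformly} over the random fluctuations of $\bar{\mathbf{H}}$ and $\hat{\mathbf{H}}$, and argue that using the deterministic asymptotic quantity $C_E$ of \eqref{Ce}---which already folds in the AN interference through \eqref{a}--\eqref{b}---is simultaneously an upper bound on Eve's leakage and achievable as a randomization rate. A secondary technical point is the interchange of the large-$N_t$ limit (needed for the closed forms \eqref{Cu} and \eqref{Ce}) with the large-blocklength limit (needed for reliability and secrecy), which is precisely where the almost-sure convergence results of Appendix~\ref{Appendix_lemmas} do the work.
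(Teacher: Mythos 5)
Your proposal is not wrong, but it does substantially more work than the paper, which offers no proof of this lemma at all: the statement is introduced in Appendix~\ref{proof_theorem_Rs} with the single sentence ``We first recall the following lemma from \cite[Lemma 1]{Secure_Zhu_1}'' and is then used as a black box to assemble $R_S=[R_U-C_E]^+$ from the separately derived $R_U$ in \eqref{Cu} and $C_E$ in \eqref{Ce}. Your sketch is essentially a reconstruction of the argument behind that cited result: Gaussian inputs and the fixed precoder/AN allocation to instantiate the Markov chain in \eqref{Cs}, wiretap (stochastic) coding across many independent coherence blocks with randomization rate $C_E$, Fano plus a soft-covering/resolvability bound to drive the normalized leakage to zero, and the observation that granting Eve full CSI and zero noise makes $C_E$ a valid upper bound on her extractable information. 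That is the right skeleton, and you correctly identify the two places where the real work lives --- controlling Eve's equivocation uniformly over the fading of $\bar{\mathbf{H}},\hat{\mathbf{H}}$, and interchanging the large-$N_t$ limit with the large-blocklength limit --- neither of which your sketch actually discharges. In short: the paper buys brevity by outsourcing the entire lemma to \cite{Secure_Zhu_1}, while your route buys self-containedness but, as written, remains a program rather than a proof; to be complete you would either need to carry out the equivocation computation in the fading-plus-AN setting or, more economically, do exactly what the paper does and invoke \cite[Lemma 1]{Secure_Zhu_1} directly.
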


Then, using Lemmas~\ref{lemma_preliminary_1}-\ref{lemma_preliminary_2} in Appendix~\ref{Appendix_lemmas}, we prove Theorem~\ref{theorem_Rs} by deriving $R_U$ and  $C_E$ separately.

The correlation matrices of $\widetilde{\mathbf{x}}_V$ and $\mathbf{n}_{AN}$ are first given by
\begin{align}
\mathbf{C}_X=\mathbb{E}\left\{\widetilde{\mathbf{x}}_V\widetilde{\mathbf{x}}_V^{H}\right\}=\mathbb{E}\left\{\mathbf{Ws}\mathbf{s}^{H}\mathbf{W}^H\right\}=\frac{\phi P}{M_t}\mathbf{I}_{M_t},
\label{Cx}
\end{align}
and
\begin{align}
\mathbf{C}_{AN}=\mathbb{E}\left\{\mathbf{n}_{AN} \mathbf{n}_{AN}^H\right\}
=\frac{(1-\phi)P}{N_t-L_t}\mathbf{I}_{N_t-L_t},
\label{C_AN}
\end{align}
respectively, where we use the result in \eqref{P_AN} and $\mathbf{W}=\sqrt{\frac{\phi P}{M_t}}\mathbf{I}_{M_t}$.
Then, according to \eqref{yv_til} and by substituting \eqref{Cx} and \eqref{C_AN}, the ergodic achievable rate of Bob can be expressed as
\begin{align}
R_U 
=&\mathbb{E} \left\{\log_2 \left|\mathbf{I}_{M_r} +  \bar{\mathbf{G}}\mathbf{C}_X \bar{\mathbf{G}}^H  \left(\hat{\mathbf{G}} \mathbf{C}_{AN}\hat{\mathbf{G}}^H + \sigma_n^2\mathbf{I}_{M_r}  \right)^{-1} \right|\right\}
\\
=&\mathbb{E} \left\{\!\log_2\! \left|\!\mathbf{I}_{M_r} \!+\!  \frac{\phi P}{M_t\sigma_n^2} \bar{\mathbf{G}}\bar{\mathbf{G}}^H  \!\!\left[\! \frac{(\!1\!-\!\phi\!)P}{(\!N_t\!-\!L_t\!)\sigma_n^2} \hat{\mathbf{G}}\hat{\mathbf{G}}^H \!+\! \mathbf{I}_{M_r}  \!\right]^{\!-\!1} \!\right|\!\right\}\!
\\
=&\mathbb{E} \left\{\!\log_2 \left|\mathbf{I}_{M_r} \!+\!  \frac{\phi P}{M_t\sigma_n^2} \left(1-\frac{\mathcal{F}(\alpha,\beta)}{4\alpha\beta}\right) \bar{\mathbf{G}} \bar{\mathbf{G}}^H \right|\!\right\}
\label{Cu_2}\\
=&\mathbb{E} \left\{\!\log_2 \left|\mathbf{I}_{M_t} \!+\!  \frac{\phi P}{M_t\sigma_n^2} \left(1-\frac{\mathcal{F}(\alpha,\beta)}{4\alpha\beta}\right) \bar{\mathbf{G}}^H \bar{\mathbf{G}} \right|\!\right\}
\label{Cu_add}\\
=&\log_2 \left|\mathbf{I}_{M_t} +  \frac{M_r \phi P}{M_t\sigma_n^2} \left(1-\frac{\mathcal{F}(\alpha,\beta)}{4\alpha\beta}\right)\mathbf{I}_{M_t} \right|
\label{Cu_3},
\end{align}
where \eqref{Cu_2} applies Lemma~\ref{lemma_preliminary_1} in Appendix~\ref{Appendix_lemmas},
\eqref{Cu_add} applies the equality $|\mathbf{I}+\mathbf{AB}|=|\mathbf{I}+\mathbf{BA}|$,
and \eqref{Cu_3} comes from the fact that $\frac{1}{M_r} \bar{\mathbf{G}}^H \bar{\mathbf{G}}$ almost surely converges to $\mathbf{I}_{M_t}$ for large $M_r$.
Thus, $R_U$ in \eqref{Cu} is directly obtained from \eqref{Cu_3}.

On the other hand, according to \eqref{yev_til}, the ergodic capacity of Eve is expressed as
\begin{align}
C_E&=\mathbb{E} \left\{\log_2 \left|\mathbf{I}_{M_e} + \bar{\mathbf{H}}\mathbf{C}_X\bar{\mathbf{H}}^H \left(\hat{\mathbf{H}}\mathbf{C}_{AN}\hat{\mathbf{H}}^H \right)^{-1} \right|\right\}
\\
&=\mathbb{E} \left\{\log_2 \left|\mathbf{I}_{M_e} + \frac{\phi(N_t-L_t)}{(1-\phi)M_t} \bar{\mathbf{H}}\bar{\mathbf{H}}^H \left(\hat{\mathbf{H}}\hat{\mathbf{H}}^H \right)^{-1} \right|\right\}
\label{Ce_1}\\
&=\mathbb{E} \left\{\log_2 \left|\mathbf{I}_{M_e} + \frac{\phi(N_t-L_t)}{(1-\phi)M_t(a-M_e)b} \bar{\mathbf{H}}\bar{\mathbf{H}}^H  \right|\right\}
\label{Ce_2}\\
&=\mathbb{E} \left\{\log_2 \left|\mathbf{I}_{M_t} + \frac{\phi(N_t-L_t)}{(1-\phi)M_t(a-M_e)b} \bar{\mathbf{H}}^H\bar{\mathbf{H}}  \right|\right\}
\label{Ce_3}
,
\end{align}
where \eqref{Ce_1} uses \eqref{Cx} and \eqref{C_AN},
and \eqref{Ce_2} applies Lemma~\ref{lemma_preliminary_2} in Appendix~\ref{Appendix_lemmas}.
In \eqref{Ce_3}, $\frac{L_t M_t}{N_t}$ columns of $\bar{\mathbf{H}}$ follow the distribution $\mathcal{CN}(\mathbf{0},\mathbf{I}_{M_e})$ and the remaining $\frac{(N_t-L_t)M_t}{N_t}$ columns follow $\mathcal{CN}(\mathbf{0},\eta\mathbf{I}_{M_e})$.
Thus, $\bar{\mathbf{H}}^H\bar{\mathbf{H}}$ converges almost surely to a diagonal matrix with $\frac{L_t M_t}{N_t}$ diagonal entries equal to $M_e$ and the remaining $\frac{(N_t-L_t)M_t}{N_t}$ diagonal entries equal to $\eta M_e$.
This further yields the derived result in \eqref{Ce} from \eqref{Ce_3}.

Finally, the ergodic achievable secrecy rate in \eqref{Rs} is obtained by substituting \eqref{Cu} and \eqref{Ce} into \eqref{secure rate}.

\section{Proof of Theorem~\ref{theorem_Rs_bound}}
\label{proof_lemma_Rs}

Substituting \eqref{Cu_low}  and \eqref{Ce_app} into \eqref{Rs} of Theorem \ref{theorem_Rs}, the ergodic achievable secrecy rate at low SNR can be expressed as
\begin{align}
R_S^\mathrm{L}&\!=\!M_t\!\! \left[\!  \log_2\!\left(\! 1\!\!+\!\!\frac{M_r \phi P}{M_t \sigma_n^2}  \right)
\!-\!  \rho \log_2\!\left(\! 1\!\!+\!\frac{\phi M_e}{(\!1\!-\!\phi\!)M_t\!\left[\!\eta\!+\!(\!1\!-\!\eta\!)\rho\right]} \!\right)
\!\!-\!\! (\!1\!-\!\rho\!) \log_2\!\!\left( \!\!1\!+\!\frac{\phi M_e\eta}{(\!1-\phi\!)M_t\!\left[\!\eta\!+\!(\!1\!-\!\eta\!)\rho\right]}\!\right) \!\right]^{+}
\\
&\!\leq \!M_t \!\left[  \log_2\!\left(\! 1\!+\!\frac{M_r \phi P}{M_t \sigma_n^2} \! \right)
\!-\!  \rho \log_2 \frac{\phi M_e}{(\!1\!-\!\phi)M_t\left[\eta\!+\!(\!1\!-\!\eta)\rho\right]}
\!-\! (\!1\!-\!\rho) \log_2 \frac{\phi M_e\eta}{(1\!-\!\phi)M_t\!\left[\eta\!+\!(1\!-\!\eta)\rho\right]} \right]^+.
\label{Rs_low_2}
\end{align}
Considering a scenario with a powerful eavesdropper, i.e., $M_e\gg M_t$, an upper bound for $R_S^\mathrm{L}$ is given in \eqref{Rs_low_2}.
After some basic manipulations, the derived result in \eqref{Rs_low_bound} is obtained.

By substituting \eqref{Cu_high} and \eqref{Ce_app} into \eqref{Rs}, the ergodic achievable secrecy rate at high SNR can be expressed as
\begin{align}
R_S^\mathrm{H}\!=& M_t \!\left[\! \log_2\!\!\left(\! 1\!+\!\frac{M_r \phi }{M_t(\!1\!-\!\phi\!)\eta \left(\!1\!-\!\frac{M_r}{N_t(1\!-\!\rho)}\!\right)}  \!\right)
\!-\! \rho \log_2\!\left(\!\! 1\!+\!\frac{\phi M_e}{(\!1\!-\!\phi\!)M_t\left[\eta\!+\!(\!1\!-\!\eta\!)\rho\right]} \!\right)\right.
\nonumber\\&\left.
\!- (1\!-\!\rho) \log_2\!\left(\!\! 1\!\!+\!\!\frac{\phi M_e\eta}{(\!1\!-\!\phi\!)M_t\!\left[\eta\!+\!(\!1\!-\!\eta\!)\rho\right]} \!\right) \!\right]^+\!\!
\\
\leq& M_t \!\left[\! \log_2\!\!\left(1+\frac{M_r \phi }{M_t(\!1\!-\!\phi\!)\eta \left(\!1\!-\!\frac{M_r}{N_t(1-\rho)}\!\right)}  \!\right)
\!-\! \rho \log_2 \frac{\phi M_e}{(\!1\!-\!\phi\!)M_t\left[\eta\!+\!(\!1\!-\!\eta\!)\rho\right]} \right.
\nonumber\\&\left.
\!- (1\!-\!\rho) \log_2 \frac{\phi M_e\eta}{(\!1\!-\!\phi\!)M_t\!\left[\eta\!+\!(\!1\!-\!\eta\!)\rho\right]}   \!\right]^+\!\!
\label{Rs_high_2}\\
=&M_t \left[  \!\log_2\!\!\left(1+\frac{M_r \phi }{M_t(\!1\!-\!\phi\!)\eta \left(\!1\!-\!\frac{M_r}{N_t(1-\rho)}\!\right)}  \!\right)
\!-\!  \log_2 \frac{\phi M_e}{(\!1\!-\!\phi\!)M_t}
\!+\! \log_2\Big(\!\eta^{\rho\!-\!1}\left[\eta\!+\!(1\!-\!\eta)\rho\right] \Big)\! \right]^+.
\label{Rs_high_3}
\end{align}
Similar to the low SNR case, an upper bound for $R_S^\mathrm{H}$ is obtained in \eqref{Rs_high_2} and \eqref{Rs_high_3} by considering $M_e\gg M_t$.
Furthermore, by adopting the condition of $M_r\gg M_t$ to guarantee a positive secrecy rate, the upper bound in \eqref{Rs_high_bound} is obtained.

\section{Proof of Lemma~\ref{lemma_rho_star}}
\label{proof_lemma_rho_star}
We prove Lemma~\ref{lemma_rho_star} by deriving the derivatives of $\chi_\mathrm{L}$ and $\chi_\mathrm{H}$ w.r.t. $\rho$.
For $\chi_\mathrm{L}$ in \eqref{g1}, the first derivative w.r.t. $\rho$ is expressed as
\begin{align}
\frac{\partial \chi_\mathrm{L} }{\partial \rho}=\eta^{\rho-1}\Big(1-\eta+\ln \eta \left[\eta+(1-\eta)\rho\right]\Big).
\label{g_rho}
\end{align}
By forcing $\frac{\partial \chi_\mathrm{L} }{\partial \rho}=0$, we obtain $\rho^*$ in \eqref{rho_star}, which is a saddle point of $\chi_\mathrm{L}$. Then we show that this saddle point maximizes $\chi_\mathrm{L}$ by checking its second derivative for two separate cases. The second derivative is given as
\begin{align}
\frac{\partial^2 \chi_\mathrm{L} }{\partial \rho^2}=\eta^{\rho-1} \ln\eta \Big(2-2\eta+\ln\eta \left[\eta+(1-\eta)\rho\right]\Big).
\label{g_rho_2}
\end{align}

1)
For $\eta\in[0.2032,1)$, it can be easily verified that $\frac{\partial^2 \chi_\mathrm{L} }{\partial \rho^2}\leq0$ for $\rho\in\left[\frac{M_t}{N_t},1\right]$. Thus $\chi_\mathrm{L}$ is a convex function w.r.t. $\rho$ which means that the saddle point $\rho^*$ in \eqref{rho_star} achieves the optimum of $\chi_\mathrm{L}$.

2)
For $\eta\in(0,0.2032)$, it can be shown that $\frac{\partial \chi_\mathrm{L} }{\partial \rho}>0$ for $\rho\in\left[\frac{M_t}{N_t},\rho^*\right)$ and $\frac{\partial \chi_\mathrm{L} }{\partial \rho}<0$ for $\rho\in(\rho^*,1]$.
Hence, $\rho^*$ in \eqref{rho_star} also achieves the largest $\chi_\mathrm{L}$ in this case.


For $\chi_\mathrm{H}$ in \eqref{g2}, the derivative w.r.t. $\rho$ is
\begin{align}
\frac{\partial \chi_\mathrm{H} }{\partial \rho}=&\frac{\eta^{\rho\!-\!2}}{\left(1\!-\!\rho\!-\!\frac{M_r}{N_t}\right)^2}
\left[\!\ln(\!1\!-\!\rho\!)[\!\eta\!+\!(\!1\!-\!\eta)\rho]\left(\!1\!-\!\rho\!-\!\frac{M_r}{N_t}\!\right) \right.
\left.\!+\! (\!1\!-\!\eta\!)(\!1\!-\!\rho\!)^2\!+\!\frac{M_r}{N_t}(\!2\rho\!+\!2\eta\!-\!2\rho\eta\!-\!1\!)\! \right]
.
\label{g2_rho}
\end{align}
Generally, $\frac{\partial \chi_\mathrm{H} }{\partial \rho}>0$ for small $\rho$ while $\frac{\partial \chi_\mathrm{H} }{\partial \rho}<0$ for large $\rho$.
Using the fact $\frac{M_r}{N_t}\rightarrow 0$ for a large $N_t$ in \eqref{g2_rho}, we have
\begin{align}
\frac{\partial \chi_\mathrm{H} }{\partial \rho}&\rightarrow \eta^{\rho-2}\Big(1-\eta+\ln \eta \left[\eta+(1-\eta)\rho\right]\Big)
=\frac{1}{\eta}\frac{\partial \chi_\mathrm{L} }{\partial \rho}
.
\label{g2_rho_2}
\end{align}
According to $\chi_\mathrm{L}$, we thus have that $\rho^*$ in \eqref{rho_star} is also a saddle point that maximizes $\chi_\mathrm{H}$.

\end{appendices}

\end{document}